\newcommand{\knkcut}{(k,n-k)-cut}
\theoremstyle{remark}     \newtheorem{example}{Example}
\theoremstyle{definition} \newtheorem{algo}[example]{Algorithm}
\theoremstyle{definition} \newtheorem{theorem}[example]{Theorem}
\theoremstyle{definition} \newtheorem{lemma}[example]{Lemma}
\theoremstyle{definition} \newtheorem{corollary}[example]{Corollary}
\theoremstyle{definition} 
\theoremstyle{definition} \newtheorem{definition}[example]{Definition}
\theoremstyle{definition} 
\theoremstyle{definition} \newtheorem{proposition}[example]{Proposition}
\def \goal{\mathop{\mathrm{goal}}}
\def \val{\mathop{\mathrm{val}}}
\let\leq\leqslant
\let\geq\geqslant
\def\@fnsymbol#1{\ensuremath{\ifcase#1\or *\or (a)\or (b)\or (c)\or (d)\or \S \or
   \mathsection\or \mathparagraph\or \|\or **\or \dagger \or \ddagger \or \dagger\dagger
   \or \ddagger\ddagger \else\@ctrerr\fi}}
\lstdefinelanguage{algo}{%
  morekeywords={input,output,initialise,such,that,variable,variables,while,do,end,
                If,endIf,For,EndFor,endFor,each,While,endWhile,then,Then,Else,for,
                from,to,return,else,pick,choose,set,Compute,Determine,compute,True,False},
  sensitive=true,
  mathescape=true,
}
\title{Multi-parameter complexity analysis for constrained size graph problems: using greediness for parameterization\footnote{Research supported by the French Agency for Research under the program TODO, ANR-09-EMER-010}}
\begin{document}

\author{\'E. Bonnet \hspace*{2mm} B. Escoffier \hspace*{2mm} V. Th. Paschos\footnote{Institut Universitaire de France} \hspace*{4.5mm}  \'E. Tourniaire
\vspace*{2mm} \\
PSL Research University, Universit\'e Paris-Dauphine, LAMSADE \\
CNRS, UMR~7243, France \\
\texttt{\{bonnet,escoffier,paschos,tourniaire\}@lamsade.dauphine.fr}}


\maketitle

\begin{abstract}
We study the parameterized complexity of a broad class of problems
called ``local graph partitioning problems'' that includes the classical fixed cardinality problems as \textsc{max $k$-vertex cover}, \textsc{$k$-densest subgraph}, etc. By developing a
technique ``greediness-for-parameterization'', we obtain fixed
parameter algorithms with respect to a pair of parameters~$k$, the
size of the solution (but \textit{not} its value) and~$\Delta$, the
maximum degree of the input graph. In particular, greediness-for-parameterization improves asymptotic running times for these problems upon random separation (that is a special case of color coding) and is more intuitive and simple. Then, we show how
these results can be easily extended for getting
standard-parameterization results (i.e., with parameter the value of
the optimal solution) for a well known local graph partitioning
problem.
\end{abstract}

\section{Introduction}\label{intro}

A local graph partitioning problem is a problem defined on some
graph $G=(V,E)$ with two integers~$k$ and~$p$. Feasible solutions
are subsets  $V' \subseteq V$ of size exactly~$k$. The value of
their solutions is a linear combination of sizes of edge-subsets and
the objective is to determine whether there exists a solution of
value at least or at most~$p$. Problems as \textsc{max $k$-vertex
cover}, \textsc{$k$-densest subgraph}, \textsc{$k$-lightest
subgraph}, \textsc{max $(k,n-k)$-cut} and \textsc{min
$(k,n-k)$-cut}, also known as fixed cardinality problems, are local
graph partitioning problems. When dealing with graph problems,
several natural parameters, other than the size~$p$ of the optimum,
can be of interest, for instance, the maximum degree~$\Delta$ of the
input graph, its treewidth, etc. To these parameters, common for any
graph problem, in the case of local graph partitioning problem
handled here, one more natural parameter of very great interest can
be additionally considered, the size~$k$ of~$V'$. For instance, the
most of these problems have mainly been studied
in~\cite{cai,Downey03cuttingup}, from a parameterized point of view,
with respect to parameter~$k$, and have been proved W[1]-hard.
Dealing with standard parameterization, the only problems that, to
the best of our knowledge, have not been studied yet, are the
\textsc{max $(k,n-k)$-cut} and the \textsc{min $(k,n-k)$-cut}
problems.

In this paper we develop a technique for obtaining
multi-parameterized results for local graph partitioning problems.
Informally, the basic idea behind it is the following. Perform a
branching with respect to a vertex chosen upon some greedy
criterion. For instance, this criterion could be to consider some
vertex~$v$ that maximizes the number of edges added to the solution
under construction. Without branching, such a greedy criterion is
not optimal. However, if at each step either the greedily chosen
vertex~$v$, or some of its neighbors (more precisely, a vertex at
bounded distance from $v$) are a good choice (they are in an optimal
solution), then a branching rule on neighbors of~$v$ leads to a
branching tree whose size is bounded by a function of~$k$
and~$\Delta$, and at least one leaf of which is an optimal solution.
This method, called ``greediness-for-parameterization'', is
presented in Section~\ref{greedy-param} together with interesting
corollaries about particular local graph partitioning problems.

The results of Section~\ref{greedy-param} can sometimes be easily extended to
standard parameterization results. In Section~\ref{standardparam} we
study standard parameterization of the two still unstudied fixed
cardinality problems \textsc{max} and \textsc{min $(k,n-k)$-cut}. We
prove that the former is fixed parameter tractable~(FPT), while,
unfortunately, the status of the latter one remains still unclear.
In order to handle \textsc{max $(k,n-k)$-cut} we first show that
when $p \leqslant k$ or $p \leqslant \Delta$, the problem is
polynomial. So, the only ``non-trivial'' case occurs when $p > k$
and $p > \Delta$, case handled by greediness-for-parameterization.
Unfortunately, this method concludes inclusion of \textsc{min
$(k,n-k)$-cut} in FPT only for some particular cases. Note that in a
very recent technical report by~\cite{fomincorr}, Fomin et al., the following problem is considered: given a graph $G$ and two integers $k,p$, determine whether there exists a set $V'\subset V$ of size {\it at
most}~$k$ such that at most~$p$ edges have exactly one endpoint in~$V'$. They prove that this problem is FPT with respect to $p$. Let us underline the fact that looking for a set of size at most
$k$ seems to be radically different that looking for a set of size
exactly $k$ (as in \textsc{min $(k,n-k)$-cut}). For instance, in the
case $k=n/2$, the former becomes the \textsc{min
cut} problem that is polynomial, while the latter becomes the \textsc{min bisection} problem that is NP-hard..

In Section~\ref{paramapprox}, we mainly revisit the parameterization
by~$k$ but we handle it from an approximation point of view. Given a
problem~$\Pi$ parameterized by parameter~$\ell$ and an instance~$I$
of~$\Pi$, a parameterized approximation algorithm with ratio $g(.)$
for~$\Pi$ is an algorithm running in time $f(\ell)|I|^{O(1)}$ that
either finds an approximate solution of value at least/at
most~$g(\ell)\ell$, or reports that there is no solution of value at
least/at most~$\ell$. We prove that, although W[1]-hard for the
exact computation, \textsc{max $(k,n-k)$-cut} has a parameterized
approximation schema with respect to~$k$ and \textsc{min
$(k,n-k)$-cut} a randomized parameterized approximation schema.
These results exhibit two problems which are hard with respect to a
given parameter but which become easier when we relax exact
computation requirements and seek only (good) approximations. To our
knowledge, the only other problem having similar behaviour is
another fixed cardinality problem, the \textsc{max $k$-vertex cover}
problem, where one has to find the subset of~$k$ vertices which
cover the greatest number of edges~\cite{marx-approx}. Note that the
existence of problems having this behaviour but with respect to the
standard parameter is an open (presumably very difficult to answer)
question in~\cite{marx-approx}.  Let us note that polynomial
approximation of \textsc{min $(k,n-k)$-cut} has been studied
in~\cite{Feigeminkcut} where it is proved that, if $k=O(\log n)$,
then the problem admits a randomized polynomial time approximation
schema, while, if $k = \Omega(\log n)$, then it admits an
approximation ratio $(1+\frac{\varepsilon k}{\log n})$, for any
$\varepsilon > 0$. Approximation of \textsc{max $(k,n-k)$-cut} has
been studied in several papers and a ratio 1/2 is achieved
in~\cite{ageev} (slightly improved with a randomized algorithm
in~\cite{feige}), for all~$k$.

Finally, in Section~\ref{otherparam}, we handle parameterization of local graph partitioning problems by the treewidth~$\mathrm{tw}$  of the input graph and show, using a standard dynamic programming technique, that they admit an $O^*(2^{\mathrm{tw}})$-time FPT algorithm, when the~$O^*(\cdot)$ notation ignores polynomial factors. Let us note that the interest of this result, except its structural
aspect (many problems for the price of a single algorithm), lies also in the fact that some local partitioning problems (this is the case, for instance, of \textsc{max} and \textsc{min $(k,n-k)$-cut}) do not fit Courcelle's Theorem~\cite{courcellemso2}. Indeed, \textsc{max} and \textsc{min bisection} are not expressible in MSO since the equality of the cardinality of two sets is not MSO-definable. In fact, if one could express that two sets have the same cardinality in MSO, one would be able to express in~MSO the fact that a word has the same number of a's and b's, on a two-letter alphabet, which would make that the set $E=\{w: |w|_a=|w|_b\}$ is MSO-definable. But we know that, on words, MSO-definability is equivalent to recognizability; we also know by the standard pumping lemma (see, for instance,~\cite{lp}) that~$E$ is not recognizable~\cite{maneth}, a contradiction. Henceforth, \textsc{max} and \textsc{min $(k,n-k)$-cut} are not expressible in~MSO; consequently, the fact that those two problems, parameterized by~tw are FPT cannot be obtained by
Courcelle's Theorem. Furthermore, even several known extended variants of~MSO which capture more problems~\cite{szeider}, does not
seem to be able to express the equality of two sets either.

\section{Greediness-for-parameterization}\label{greedy-param}

We first formally define the class of local graph paritioning problems.
\begin{definition}\label{local}
A local graph partitioning problem is a problem having as input a
graph $G=(V,E)$ and two integers~$k$ and~$p$. Feasible solutions are
subsets  $V' \subseteq V$ of size exactly~$k$. The value of a solution,
denoted by~$\mathrm{val}(V')$, is a linear combination $\alpha_1m_1+\alpha_2m_2$ where
$m_1=|E(V')|$, $m_2=|E(V',V\setminus V')|$ and $\alpha_1, \alpha_2 \in \mathbb{R}$.
The goal is to determine whether there exists a solution of value at least $p$
(for a maximization problem) or at most $p$ (for a minimization problem).
\end{definition}
Note that $\alpha_1=1$, $\alpha_2=0$ corresponds to
\textsc{$k$-densest subgraph} and \textsc{$k$-sparsest subgraph},
while $\alpha_1=0$, $\alpha_2=1$ corresponds to
\textsc{$(k,n-k)$-cut}, and $\alpha_1 = \alpha_2=1$ gives
\textsc{$k$-coverage}.
As a local graph
partitioning problem is entirely defined by $\alpha_1$, $\alpha_2$
and $\goal{} \in \{\min,\max\}$ we will unambiguously denote by
$\mathcal L(\goal{},\alpha_1,\alpha_2)$ the corresponding problem.
For conciseness and when no confusion is possible, we will use
\textit{local problem} instead. In the sequel,~$k$ always denotes
the size of feasible subset of vertices and~$p$ the standard
parameter, i.e., the solution-size. Moreover, as a partition into
$k$ and $n-k$ vertices, respectively, is completely defined by the
subset~$V'$ of  size~$k$, we will consider it to be the solution. A
\textit{partial solution}~$T$ is a subset of~$V'$ with less than~$k$
vertices. Similarly to the value of a solution, we define the value
of a partial solution, and denote it by $\val{}(T)$.

Informally, we devise incremental algorithms for local problems that add vertices to an initially empty set~$T$ (for
``taken'' vertices) and stop when~$T$ becomes of size~$k$, i.e.,  when~$T$ itself becomes a feasible solution.
A vertex introduced in~$T$ is irrevocably introduced there and will be not removed later.
\begin{definition}\label{contribution}
Given a local graph partitioning problem $\mathcal L(\goal{},\alpha_1,\alpha_2)$,
the \textit{contribution} of a vertex $v$ within a partial solution $T$
(such that $v \in T$) is defined by
$\delta(v,T) = \frac{1}{2}\alpha_1
|E(\{v\},T)|+\alpha_2|E(\{v\},V \setminus T)|$
\end{definition}
Note that the value of any (partial) solution $T$ satifies
$\val{}(T)=\Sigma_{v \in T} \delta(v,T)$. One can also remark that
$\delta(v,T)=\delta(v,T \cap N(v))$, where~$N(v)$ denotes the (open) neighbourhood of the vertex~$v$.
Function~$\delta$ is called the \textit{contribution function} or simply
the \textit{contribution} of the corresponding local problem.
\begin{definition}
Given a local graph partitioning problem $\mathcal L(\goal{},\alpha_1,\alpha_2)$, a contribution function is said to be \textit{degrading} if for every~$v$, $T$ and~$T'$ such that $v \in T \subseteq T'$, $\delta(v,T) \leqslant \delta(v,T')$ for $\goal{}=\min$
(resp., $\delta(v,T) \geqslant \delta(v,T')$ for $\goal{}=\max$).
\end{definition}
Note that it can be easily shown that for a maximization problem, a
contribution function is degrading if and only if $\alpha_2\geq
\alpha_1/2$ ($\alpha_2\leq \alpha_1/2$ for a minimization problem).
So in particular \textsc{max $k$-vertex cover}, \textsc{$k$-sparsest
subgraph} and \textsc{max $(k,n-k)$-cut} have a degrading
contribution function.
\begin{theorem}\label{th1}
Every local partitioning problem having a degrading
contribution function can be solved in $O^*(\Delta^k)$.
\end{theorem}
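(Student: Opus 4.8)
The plan is to solve the problem by a bounded-depth branching algorithm that grows a feasible solution one vertex at a time. I would maintain two disjoint sets, $T$ (vertices already ``taken'') and $F$ (vertices ``forbidden''), starting from $T=F=\emptyset$: a node of the search tree \emph{accepts} as soon as $|T|=k$ and $\val(T)\geq p$ (resp.\ $\leq p$ for $\goal=\min$), and \emph{rejects} when $|V\setminus(T\cup F)|<k-|T|$. Otherwise one picks the \emph{greedy} vertex $v\in V\setminus(T\cup F)$ maximizing, for $\goal=\max$ (minimizing, for $\goal=\min$), the marginal gain $\mu(v,T):=\val(T\cup\{v\})-\val(T)$, and branches into at most $\Delta+1$ subproblems: one in which $v$ is added to $T$, and, for each $u\in N(v)\setminus(T\cup F)$, one in which $u$ is added to $T$ and $v$ is added to $F$. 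Every subproblem increases $|T|$ by exactly one, so the search tree has depth at most $k$; since the threshold test and the selection of $v$ cost only polynomial time, the total running time is $O^*(\Delta^k)$.

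Correctness reduces to a single \emph{branching (exchange) lemma}: at any node, if there exists an optimal solution $V^{*}$ with $T\subseteq V^{*}$ and $V^{*}\cap F=\emptyset$, then one of the children enjoys the same property. As $|T|<k$ there is some $w\in V^{*}\setminus T$. If some such $w$ belongs to $N(v)$, the child adding $u=w$ works, with $V^{*}$ itself as the witness. Otherwise $N(v)\cap V^{*}\subseteq T$, and I would show that $V^{**}:=(V^{*}\setminus\{w\})\cup\{v\}$ is again optimal; since $V^{**}\supseteq T\cup\{v\}$ and $v\notin F$, this makes the child adding $v$ correct.

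For the optimality of $V^{**}$, a one-line computation from Definition~\ref{contribution} gives, for $x\notin S$,
\[ \mu(x,S)=\alpha_2\deg(x)+(\alpha_1-2\alpha_2)\,|N(x)\cap S| , \]
so $S\mapsto\mu(x,S)$ is monotone non-increasing for $\goal=\max$ and non-decreasing for $\goal=\min$ --- which is exactly the degrading hypothesis ($\alpha_2\geq\alpha_1/2$, resp.\ $\alpha_2\leq\alpha_1/2$). Writing $S_0:=V^{*}\setminus\{w\}$, one has $\val(V^{**})-\val(V^{*})=\mu(v,S_0)-\mu(w,S_0)$. Now $N(v)\cap S_0=N(v)\cap T$ (because $N(v)\cap V^{*}\subseteq T$ and $w\notin T$), so $\mu(v,S_0)=\mu(v,T)$; the greedy choice of $v$, for which $w$ is a legal candidate since $w\in V\setminus(T\cup F)$, yields $\mu(v,T)\geq\mu(w,T)$ for $\goal=\max$; and monotonicity of $\mu(w,\cdot)$ with $T\subseteq S_0$ yields $\mu(w,T)\geq\mu(w,S_0)$. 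Chaining the three relations gives $\val(V^{**})\geq\val(V^{*})$, hence equality by optimality; the $\goal=\min$ case is the mirror image, with all inequalities reversed.

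The step I expect to be the real obstacle is pinning down the greedy rule that makes this exchange tight. Branching on, say, the vertex of largest current contribution $\delta(v,T\cup\{v\})$ leaves behind an uncontrolled correction term --- precisely the $(\alpha_1-2\alpha_2)\,|N(v)\cap T|$ above --- which need not have the favourable sign, whereas passing to the marginal quantity $\mu$, whose monotonicity is what ``degrading'' actually encodes, cancels it. What remains afterwards is routine bookkeeping: the two base cases, the observation that moving $v$ into $F$ in the neighbour-branches only serves to keep the branches disjoint and never discards an optimum, and the counting that yields the $O^*(\Delta^k)$ bound.
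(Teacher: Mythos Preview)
Your approach is essentially the paper's: greedy choice of a vertex, branch over its closed neighbourhood, and an exchange argument showing that at a deviating node the greedy vertex can replace any vertex of $V^*\setminus T$. Two remarks, however.

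First, there is a small but genuine gap in your exchange lemma. Your case split is ``some $w\in V^*\setminus T$ lies in $N(v)$'' versus ``$N(v)\cap V^*\subseteq T$'', and you never treat the possibility $v\in V^*$ separately. If $v\in V^*$ and also some $w\in N(v)\cap(V^*\setminus T)$, you send the invariant to the $u=w$ child, where you have put $v$ into $F$; then $V^*\cap F\neq\emptyset$ and $V^*$ is no longer a witness. Likewise, if $v\in V^*$ and $N(v)\cap V^*\subseteq T$, your swap $V^{**}=(V^*\setminus\{w\})\cup\{v\}$ has size $k-1$. The fix is the obvious one: check $v\in V^*$ first and, if so, take the $v$-child (this is exactly how the paper's argument is organised, since it branches on $N[v]$ and has no forbidden set).

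Second, your closing paragraph is off. The paper does branch on the contribution $\delta(v,T\cup\{v\})$, and it works: a one-line computation gives $\delta(v,T\cup\{v\})=\alpha_2\deg(v)+\tfrac12(\alpha_1-2\alpha_2)\,|N(v)\cap T|$, which is monotone in $T$ with the same sign as your $\mu$ (it differs from $\mu$ only by the factor $\tfrac12$ on the non-constant term). So $\delta$ is not ``uncontrolled''; the degrading hypothesis is precisely what makes either criterion monotone. Your forbidden set $F$ is harmless but unnecessary --- the paper does without it.
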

\begin{proof}
With no loss of generality, we carry out the proof for a minimization
local problem $\mathcal L(\min,\alpha_1,\alpha_2)$. We recall that $T$
will be a partial solution and eventually a feasible solution.
Consider the following algorithm \texttt{ALG1} which branches
upon the closed neighborhood $N[v]$ of a vertex $v$ minimizing the
greedy criterion $\delta(v,T \cup \{v\})$.
\begin{algo}[\texttt{ALG1}($T$,$k$)] Set $T = \emptyset$;
\begin{itemize}
\item if $k > 0$ then:
\begin{itemize}
\item pick the vertex $v \in V \setminus T$ minimizing $\delta(v,T \cup \{v\})$;
\item for each vertex $w \in N[v] \setminus T$ run $\texttt{ALG1}(T \cup \{w\}$,$k-1)$;
\end{itemize}
\item else ($k=0$), store the feasible solution $T$;
\item output the best among the solutions stored.
\end{itemize}
\end{algo}
The branching tree of \texttt{ALG1} has depth $k$, since
we add one vertex at each recursive call, and arity at most
$\max_{v \in V}|N[v]|=\Delta+1$, where~$N[v]$ denotes the closed neighbourhood of~$v$. Thus, the algorithm runs in~$O^*(\Delta^k)$.

For the optimality proof, we use a classical hybridation technique between
some optimal solution and the one solution computed by \texttt{ALG1}.

Consider an optimal solution $V'_{\mathrm{opt}}$ different from the solution~$V'$ computed by \texttt{ALG1}.
A node $s$ of the branching tree has two characteristics: the
partial solution $T(s)$ at this node (denoted simply $T$ if no
ambiguity occurs) and the vertex chosen by the greedy criterion
$v(s)$ (or simply $v$).
We say that a node~$s$ of the branching tree is \textit{conform}
to the optimal solution~$V'_{\mathrm{opt}}$ if $T(s) \subseteq V'_{\mathrm{opt}}$.
A node $s$ \textit{deviates} from the optimal solution $V'_{\mathrm{opt}}$
if none of its sons is conform to~$V'_{\mathrm{opt}}$.

We start from the root of the branching tree and, while possible, we
move to a conform son of the current node. At some point we reach a node $s$
which deviates from~$V'_{\mathrm{opt}}$. We set $T=T(s)$
and $v=v(s)$. Intuitively, $T$ corresponds to the shared
choices between the optimal solution and \texttt{ALG1}  made along the branch from the root to the node $s$
of the branching tree. Setting $V_n = V'_{\mathrm{opt}} \setminus T$,~$V_n$ does not intersect $N[v]$, otherwise $s$ would
not be deviating.

\begin{figure}
\label{fig1}
\begin{center}
%
%
%
%
%
%
%
\includegraphics[scale=.8]{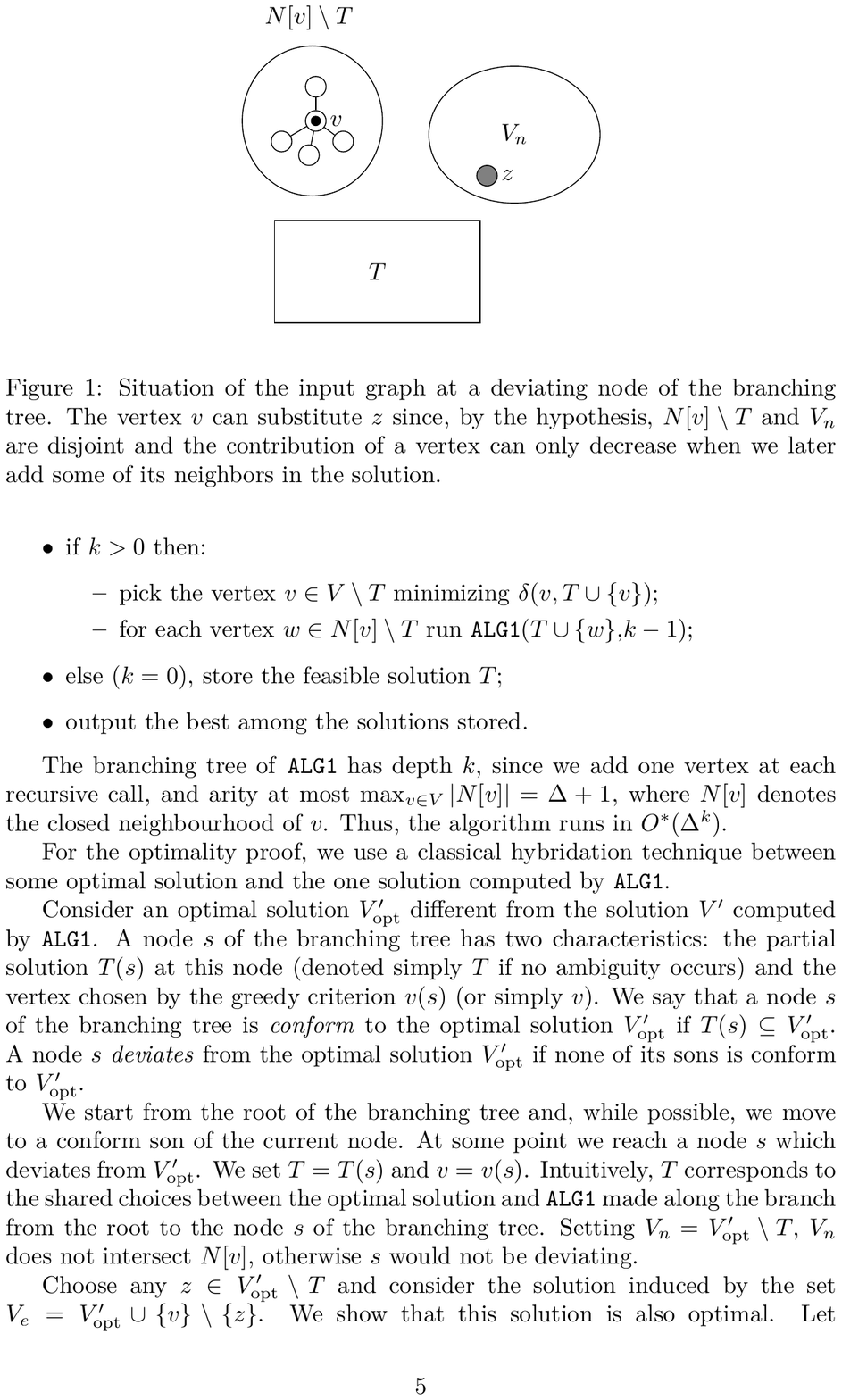}
\caption{Situation of the input graph at a deviating node of the branching tree. The vertex $v$ can substitute $z$ since,
by the hypothesis, $N[v] \setminus T$ and $V_n$ are disjoint and the contribution of a vertex can only decrease when we later
add some of its neighbors in the solution.}
\end{center}
\end{figure}

Choose any $z \in V'_{\mathrm{opt}} \setminus T$ and consider the
solution induced by the set $V_e=V'_{\mathrm{opt}}\cup \{v\}
\setminus \{z\}$. We show that this solution is also optimal. Let
$V_c = V'_{\mathrm{opt}} \setminus \{z\}$. We have
$\val{}(V_e)=\Sigma_{w \in V_c}\delta(w,V_e)+\delta(v,V_e)$.
Besides, $\delta(v,V_e) = \delta(v,V_e \cap N(v)) = \delta(v,T \cup
\{v\})$ since $V_e \setminus (T \cup \{v\})=V_n$ and according to
the last remark of the previous paragraph, $N(v) \cap V_n =
\emptyset$. By the choice of $v$, $\delta(v,T \cup \{v\}) \leqslant
\delta(z,T \cup \{z\})$, and, since $\delta$ is a degrading
contribution, $\delta(z,T \cup \{z\}) \leqslant
\delta(z,V'_{\mathrm{opt}})$. Summing up, we get $\delta(v,V_e)
\leqslant \delta(z,V'_{\mathrm{opt}})$ and $\val{}(V_e) \leqslant
\Sigma_{w \in V_c}\delta(w,V_e) +\delta(z,V'_{\mathrm{opt}})$. Since
$v$ is not in the neighborhood of $V'_{\mathrm{opt}} \setminus T =
V_n$ only $z$ can degrade the contribution of those vertices, so
$\Sigma_{w \in V_c} \delta(w,V_e) \leqslant \Sigma_{w \in
V_c}\delta(w,V'_{\mathrm{opt}})$, and $\val{}(V_e) \leqslant
\Sigma_{w \in V_c}\delta(w,V'_{\mathrm{opt}})
+\delta(z,V'_{\mathrm{opt}}) = \val{}(V'_{\mathrm{opt}})$.

Thus, by repeating this argument at most $k$
times, we can conclude that the solution computed
by \texttt{ALG1} is as good as~$V'_{\mathrm{opt}}$.
\end{proof}
\begin{corollary}\label{cor6}
 \textsc{max $k$-vertex cover}, \textsc{$k$-sparsest subgraph} and \textsc{max $(k,n-k)$-cut}
 can be solved in $O^*(\Delta^k)$.
\end{corollary}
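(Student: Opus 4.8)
The plan is to verify that each of the three problems listed is a local graph partitioning problem whose contribution function is degrading, and then to invoke Theorem~\ref{th1} directly. First I would identify the parameters: in the notation $\mathcal{L}(\goal{},\alpha_1,\alpha_2)$, \textsc{max $k$-vertex cover} maximizes the number $m_1+m_2=|E(V',V)|$ of edges incident to~$V'$, hence it is $\mathcal{L}(\max,1,1)$; \textsc{$k$-sparsest subgraph} minimizes $m_1=|E(V')|$, hence it is $\mathcal{L}(\min,1,0)$; and \textsc{max $(k,n-k)$-cut} maximizes $m_2=|E(V',V\setminus V')|$, hence it is $\mathcal{L}(\max,0,1)$.

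Next I would check degradingness via the characterization already recorded in the text, namely that a contribution function is degrading iff $\alpha_2\geq\alpha_1/2$ for a maximization problem and $\alpha_2\leq\alpha_1/2$ for a minimization problem. For \textsc{max $k$-vertex cover} we have $\alpha_2=1\geq 1/2=\alpha_1/2$; for \textsc{max $(k,n-k)$-cut} we have $\alpha_2=1\geq 0=\alpha_1/2$; and for \textsc{$k$-sparsest subgraph}, a minimization problem, we have $\alpha_2=0\leq 1/2=\alpha_1/2$. So all three contribution functions are degrading. If one prefers an argument in place of the citation: enlarging a partial solution $T$ to $T'$ only affects $\delta(v,\cdot)$ through the edges joining $v$ to $T'\setminus T$, each of which ceases to be counted as an edge leaving the current set (weight $\alpha_2$) and starts being counted as an internal edge (weight $\alpha_1/2$), so $\delta(v,\cdot)$ changes by exactly $\alpha_1/2-\alpha_2$ per such edge, and the monotonicity required by the definition of ``degrading'' is precisely the corresponding sign condition.

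Finally I would apply Theorem~\ref{th1} to each of the three problems to conclude the $O^*(\Delta^k)$ bound. I do not expect any real obstacle here: the statement is an immediate specialization of Theorem~\ref{th1}, and the entire content is the one-line bookkeeping of matching each named problem to its pair $(\alpha_1,\alpha_2)$ and verifying the sign inequality; the only point worth a sentence is noting that \textsc{$k$-sparsest subgraph} is a minimization problem so that the minimization form of the condition is the one that applies.
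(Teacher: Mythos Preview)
Your proposal is correct and follows exactly the paper's approach: the paper's proof consists of the single remark that the three problems have a degrading contribution function, after which Theorem~\ref{th1} applies. You have simply made explicit the $(\alpha_1,\alpha_2)$ values and the verification of the sign condition $\alpha_2\geq\alpha_1/2$ (resp.\ $\leq$) that the paper leaves implicit.
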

As mentioned before, the local problems mentioned in Corollary~\ref{cor6}
have a degrading contribution.
\begin{theorem}\label{th2}
Every local partitioning problem can be solved in $O^*((\Delta k)^{2k})$.
\end{theorem}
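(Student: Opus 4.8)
\emph{Plan.} The plan is to reduce the general case to the non-degrading one and then exploit the single feature that breaks the proof of Theorem~\ref{th1}: when the contribution is non-degrading, a solution vertex can only \emph{improve} the objective by later acquiring neighbours inside the solution. Replacing $\mathcal L(\max,\alpha_1,\alpha_2)$ by $\mathcal L(\min,-\alpha_1,-\alpha_2)$ if needed, I would assume we face $\mathcal L(\min,\alpha_1,\alpha_2)$ with $\alpha_2>\alpha_1/2$; if the contribution is degrading ($\alpha_2\le\alpha_1/2$), Theorem~\ref{th1} already yields $O^*(\Delta^k)\subseteq O^*((\Delta k)^{2k})$ and there is nothing to prove. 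Put $\beta=\tfrac{\alpha_1}{2}-\alpha_2<0$, so that $\delta(v,T)=\alpha_2\,|E(\{v\},V)|+\beta\,|E(\{v\},T)|$ is non-increasing in $T$.

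First I would use a structural decomposition of an arbitrary optimum: write $V_{\mathrm{opt}}=J\cup I$, where $J$ is the union of the non-trivial connected components of $G[V_{\mathrm{opt}}]$ and $I$ the set of vertices isolated in $G[V_{\mathrm{opt}}]$. There is no edge between $I$ and $J$, so $\val(V_{\mathrm{opt}})=\val(J)+\sum_{v\in I}\alpha_2\,|E(\{v\},V)|$, and $G[J]$ has at most $k/2$ components, each of size at most $k$. Next I would observe that, \emph{given} $J$, the part $I$ is cheap to recover: pick any $k-|J|$ vertices of $G-N[J]$ of smallest value $\alpha_2\,|E(\{v\},V)|$; since $\beta<0$, reinstating the incidental edges these vertices may induce can only decrease the objective, so the resulting $k$-set is again an optimum. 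Hence it suffices to enumerate enough candidate sets $J$.

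I would enumerate $J$ by a branching that grows it one connected piece at a time. Keep a partial set $T$; at each node either (i)~choose a not-yet-saturated vertex $u\in T$ and branch over the at most $\Delta$ ways of adding a still-allowed neighbour of $u$, plus the one way of ``saturating'' $u$ (forbidding its remaining neighbours), or (ii)~when every vertex of $T$ is saturated and $|T|<k$, open a fresh piece with a new seed. A root-to-leaf branch performs at most $k$ ``add'' moves and at most $k$ ``saturate/open'' moves, so the search tree has depth $O(k)$; branching also on the choice of $u$ (and, crucially, on the seed) keeps the arity $O(\Delta k)$, which is what produces the $O^*((\Delta k)^{2k})$ bound. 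Correctness reruns the hybridisation of Theorem~\ref{th1}, but more carefully: follow conform sons to a first deviating node with partial solution $T$ and greedy vertex $v$; if $V_{\mathrm{opt}}\setminus T$ has a vertex $z$ isolated in $G[V_{\mathrm{opt}}\setminus T]$, the swap $V_{\mathrm{opt}}\cup\{v\}\setminus\{z\}$ is again optimal, now using the reverse monotonicity of $\delta$ in place of ``degrading''; otherwise $V_{\mathrm{opt}}\setminus T$ is a disjoint union of non-trivial pieces, none adjacent to $T$, and one must instead descend along an ``open a fresh piece'' move. That last point is where I expect the main difficulty: choosing the seed of a new piece among all $n$ vertices would cost a ruinous $n^{\Theta(k)}$, so one has to show that only $O(\Delta k)$ candidate seeds need be examined — either by an exchange argument (since the remaining pieces are mutually non-adjacent, a fresh piece may be assumed to start at a vertex that is greedily optimal among the still-allowed ones) or by first discarding the vertices that can never serve as useful seeds. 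Once the seed branching is so bounded, depth $O(k)$ and arity $O(\Delta k)$ close the argument, and a careful split of the two move types pins the exponent at $2k$.
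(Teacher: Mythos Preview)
Your proposal has a genuine gap, and it is precisely the one you flag yourself: the seed problem for opening a fresh connected piece. Bounding the number of candidate seeds by $O(\Delta k)$ is the whole difficulty, and neither of your two suggested fixes actually works as stated. The exchange argument that works for an isolated vertex $z$ (swap $z$ for the greedy $v$) fails for a seed of a non-trivial piece, because removing the seed changes the contributions of its neighbours inside that piece, so you cannot compare $\delta(v,T\cup\{v\})$ with $\delta(z,V'_{\mathrm{opt}})$ in the way Theorem~\ref{th1} does. And ``discarding vertices that can never serve as useful seeds'' is not a proof until you exhibit a criterion that leaves only $O(\Delta k)$ survivors --- nothing in your write-up does this. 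As it stands, your branching has arity $n$ at every ``open a fresh piece'' move and collapses to $n^{\Theta(k)}$.

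The paper sidesteps the seed issue altogether by never choosing a seed. At each node it \emph{enumerates all connected induced subgraphs} of every size $i\le k$ in $V\setminus T$ (this costs $O^*(\Delta^{2k})$ by a separate counting lemma), keeps the best one $S_i$ of each size (i.e., the one minimizing $\val(T\cup S_i)$), and branches on the at most $\sum_{i\le k} i = O(k^2)$ vertices of $\bigcup_i S_i$. Depth $k$ and arity $O(k^2)$ give a tree of size $O(k^{2k})$; together with the $O^*(\Delta^{2k})$ work per node this yields $O^*((\Delta k)^{2k})$. The hybridisation then swaps an entire maximal connected piece $H$ of $V'_{\mathrm{opt}}\setminus T$ for $S_{|H|}$, using the elementary lemma that if $\val(A\cup X)\le\val(B\cup X)$ and neither $A$ nor $B$ has edges into $Y$ then $\val(A\cup X\cup Y)\le\val(B\cup X\cup Y)$. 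This lemma holds for \emph{all} $(\alpha_1,\alpha_2)$, so the paper's argument is uniform: your preliminary reduction to the non-degrading case and the $J/I$ decomposition are not needed.
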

\begin{proof}
Once again, with no loss of generality, we prove the theorem in the case of minimization, i.e., $\mathcal L(\min,\alpha_1,\alpha_2)$.
The proof of Theorem~\ref{th2} involves an algorithm fairly similar
to \texttt{ALG1} but instead of branching on a vertex chosen
greedily and its neighborhood, we will branch on sets of vertices
inducing connected components (also chosen greedily) and the
neighborhood of those sets.

Let us first state the following straightforward lemma that bounds  the number of induced connected components and the running
time to enumerate them.
\begin{lemma}\label{connectedinduced}
One can enumerate the connected induced subgraphs of size up to~$k$
in time $O^*(\Delta^{2k})$.
\end{lemma}
\begin{proof}[Proof of Lemma~\ref{connectedinduced}]
 One can easily enumerate with no redundancy all the
connected induced subgraph of size $k$ which contains a vertex $v$.
Indeed, one can label the vertices of a graph $G$ with integers from
$1$ to $n$, and at each step, take the vertex in the built connected
component with the smaller label and decide once and for all which
of its neighbors will be in the component too. That way, you get
each connected induced component in a unique manner.

Now, it boils down to counting the number of connected induced
subgraph of size~$k$ which contains a given vertex $v$. We denote
that set of components by~$\mathcal C_{k,v}$. Let us show that there
is an injection from~$\mathcal C_{k,v}$ to the set~$\mathcal
B_{k\lceil \log \Delta \rceil}$ of the binary trees with~$k\lceil
\log \Delta \rceil$ nodes.

Recall that the vertices of $G$ are labeled from $1$ to $n$. Given a
component $C \in \mathcal C_{k,v}$, build the following binary tree.
Start from the vertex $v$. From the complete binary tree of height
$\lceil \log \Delta \rceil$, owning a little more than $\Delta$
ordered leaves, place in those leaves the vertices of $N(v)$
according to the order $\leqslant$, and keep only the branches
leading to vertices in $C \cap N(v)$. Iterate this process until you
get all the vertices of $C$ exactly once. When a vertex of $C$
reappears, do not keep the corresponding branch. That way, you get
for each vertex of $C$ a branch of size $\lceil \log \Delta \rceil$,
and hence there are $k \lceil \log \Delta \rceil$ nodes in the tree.

Recall that $|\mathcal B_{k\lceil \log \Delta \rceil}|$ is given by
the Catalan numbers, so $|\mathcal B_{k\lceil \log \Delta
\rceil}|=\frac{(2k\lceil \log \Delta \rceil)!}{(k\lceil \log \Delta
\rceil)!(k\lceil \log \Delta \rceil+1)!}=O^*(4^{k\log
\Delta})=O^*(\Delta^{2k})$. So, $\Sigma_{v \in V}|\mathcal
C_{k,v}|=O^*(\Delta^{2k})$. The proof of
Lemma~\ref{connectedinduced} is now completed.
\end{proof}
Consider now the following algorithm.
\begin{algo}[\texttt{ALG2}($T$,$k$)]
set $T = \emptyset$;\\
\texttt{ALG2}($T$,$k$)
\begin{itemize}
\item if $k > 0$ then, for each $i$ from $1$ to $k$,
\begin{itemize}
\item find $S_i \in V \setminus T$ minimizing $val(T \cup S_i)$
      with $S_i$ inducing a connected component of size $i$.
\item for each $i$, for each $v \in S_i$, run $\texttt{ALG2}(T \cup \{v\}$,$k-1)$;
\end{itemize}
\item else ($k=0$), stock the feasible solution $T$.
\end{itemize}
output the stocked feasible solution $T$ minimizing $val(T)$.
\end{algo}
The branching tree of \texttt{ALG2} 
has size $O(k^{2k})$.
Computing the $S_i$ in each
node takes time $O^*(\Delta^{2k})$ according to
Lemma~\ref{connectedinduced}. Thus, the algorithm runs in
$O^*((\Delta k)^{2k})$.

\begin{figure}
\label{fig2}
\begin{center}
\includegraphics[scale=.8]{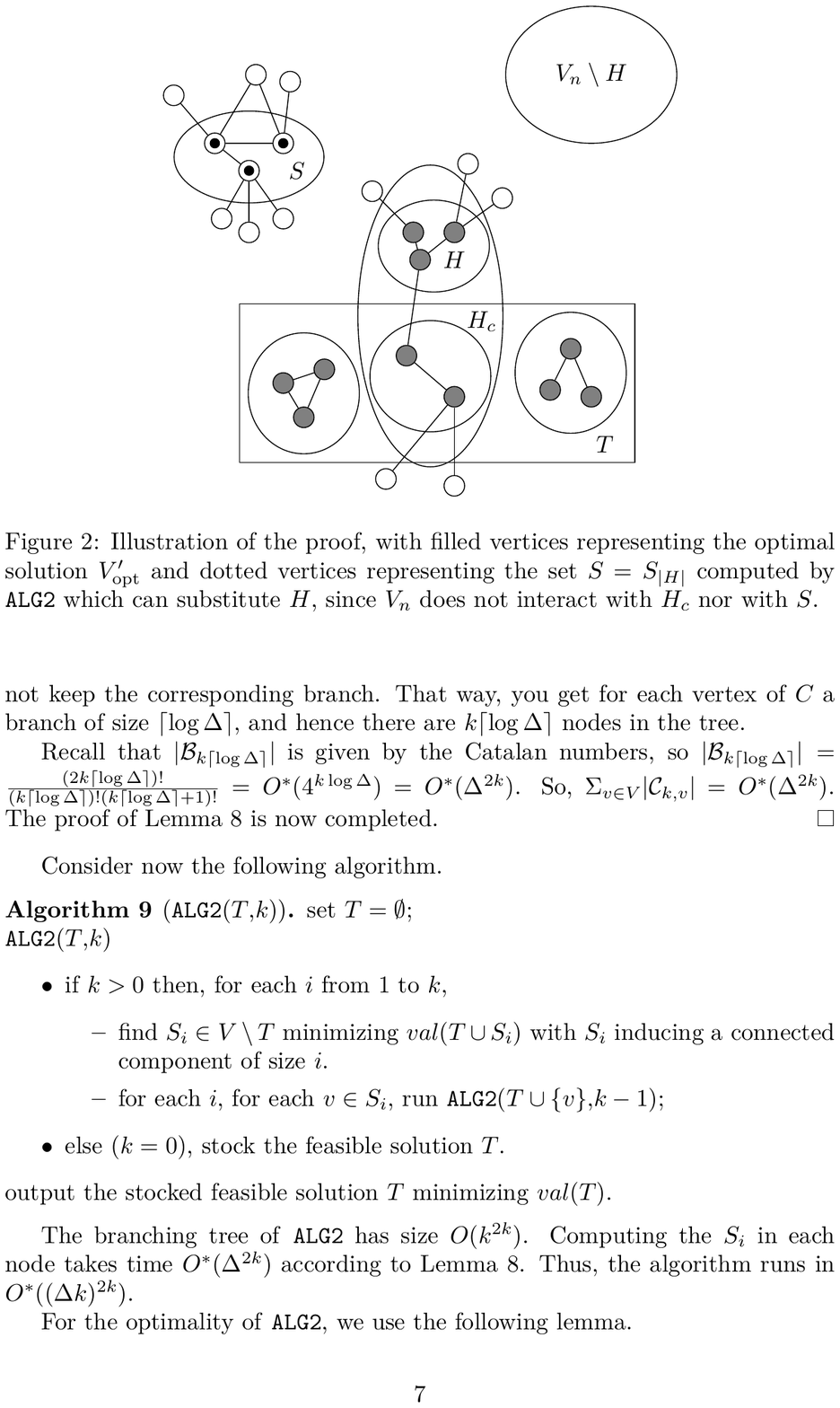}
\caption{Illustration of the proof, with filled
vertices representing the optimal solution
$V'_{\mathrm{opt}}$ and dotted vertices representing
the set $S=S_{|H|}$ computed by \texttt{ALG2}
which can substitute $H$, since $V_n$ does not
interact with $H_c$ nor with $S$.}
\end{center}
\end{figure}

For the optimality of \texttt{ALG2}, we use the following lemma.
\begin{lemma}\label{indep}
Let $A$,$B$,$X$,$Y$ be pairwise disjoint sets of vertices such that
$\val{}(A \cup X) \leqslant \val{}(B \cup X)$, $N[A] \cap Y =
\emptyset$ and $N[B] \cap Y = \emptyset$. Then, $\val{}(A \cup X
\cup Y) \leqslant \val{}(B \cup X \cup Y)$.
\end{lemma}
\begin{proof}[Proof of Lemma~\ref{indep}]
Simply observe that $\val{}(A \cup X \cup Y)=\val{}(Y)+\val{}(A \cup
X)-2\alpha_2|E(X,Y)|+\alpha_1|E(X,Y)| \leqslant \val{}(Y)+\val{}(B
\cup X)-2\alpha_2|E(X,Y)|+\alpha_1|E(X,Y)|=\val{}(B \cup X \cup Y)$,
that completes the proof of the lemma.
\end{proof}
We now show that \texttt{ALG2} is sound, using again hybridation
between an optimal solution $V'_{\mathrm{opt}}$ and the one solution
found by \texttt{ALG2}. We keep the same notation as in the proof of
the soundness of \texttt{ALG1}. Node~$s$ is a node of the branching
tree which deviates from $V'_{\mathrm{opt}}$, all nodes in the
branch between the root and $s$ are conform to $V'_{\mathrm{opt}}$,
the shared choices constitute the set of vertices $T=T(s)$ and, for
each~$i$, set $S_i=S_i(s)$ (analogously to $v(s)$ in the previous
proof, $s$ is now linked to the subsets $S_i$ computed at this
node). Set $V_n=V'_{\mathrm{opt}} \setminus T$. Take a maximal
connected (non empty) subset $H$ of $V_n$. Set $S=S_{|H|}$ and
consider $V_e= V'_{\mathrm{opt}} \setminus H \cup S = (T \cup V_n)
\setminus H \cup S = T \cup S \cup (V_n \setminus H)$. Note that, by
hypothesis, $N[S] \cap V_n = \emptyset$ since $s$ is a deviating
node. By the choice of $S$ at the node $s$, $\val{}(T \cup S)
\leqslant \val{}(T \cup H)$. So, $\val{}(V_e) = \val{}(T \cup S \cup
(V_n \setminus H)) =
     \val{}(T \cup H \cup (V_n \setminus H))
     = \val{}(T \cup V_n) = \val{}(V'_{\mathrm{opt}})$ according to Lemma~\ref{indep},
    since by construction neither~$N[H]$ nor~$N[S]$, do intersect $V_n \setminus H$.
Iterating the argument at most $k$ times we get to a leaf of the
branching tree of \texttt{ALG2} which yields a solution as good as
$V'_{\mathrm{opt}}$. The proof of the theorem is now completed.
\end{proof}
\begin{corollary}\label{cor11}
 \textsc{$k$-densest subgraph} and \textsc{min $(k,n-k)$-cut}
 can be solved in $O^*((\Delta k)^{2k})$.
\end{corollary}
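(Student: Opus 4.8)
The plan is to observe that both problems are instances of the class defined in Definition~\ref{local}, and then to invoke Theorem~\ref{th2} directly. First I would recall that \textsc{$k$-densest subgraph} is exactly the local problem $\mathcal L(\max,1,0)$: a feasible solution is a $k$-subset $V'$, and its value is $\val(V')=1\cdot|E(V')|+0\cdot|E(V',V\setminus V')|=|E(V')|$, the number of edges induced by the chosen set, which is precisely the quantity ``densest'' asks to maximize. Likewise, \textsc{min $(k,n-k)$-cut} is $\mathcal L(\min,0,1)$, since there $\val(V')=|E(V',V\setminus V')|$ is the number of edges crossing the partition, to be minimized over all $k$-subsets. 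So the only thing to verify is the membership of these two problems in the scope of Definition~\ref{local}, which is immediate.

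Since Theorem~\ref{th2} applies to \emph{every} local partitioning problem, with no restriction on $\alpha_1$, $\alpha_2$ or $\goal$, applying it with the two parameter settings above yields at once an $O^*((\Delta k)^{2k})$-time algorithm for each problem, which is the claimed bound.

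The only subtlety worth flagging is why one cannot appeal to the faster Theorem~\ref{th1} (equivalently Corollary~\ref{cor6}) instead: that would require a degrading contribution function. By the characterisation recorded just after the definition of ``degrading'', a maximization problem has a degrading contribution iff $\alpha_2\ge\alpha_1/2$, and a minimization problem iff $\alpha_2\le\alpha_1/2$. For \textsc{$k$-densest subgraph} we have $\alpha_2=0<1/2=\alpha_1/2$, and for \textsc{min $(k,n-k)$-cut} we have $\alpha_2=1>0=\alpha_1/2$; in both cases the contribution is \emph{not} degrading, so the single-vertex branching of \texttt{ALG1} is not guaranteed to be sound, and one genuinely needs the branching on greedily chosen connected components performed by \texttt{ALG2}. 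Consequently there is essentially no obstacle in this corollary: it is a direct specialisation of Theorem~\ref{th2}, and the ``proof'' amounts only to identifying the two parameter settings and noting that they escape the stronger Theorem~\ref{th1}.
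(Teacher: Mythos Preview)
Your proposal is correct and follows exactly the paper's own approach: the paper's proof of this corollary is the single sentence ``simply observe that the problems mentioned in Corollary~\ref{cor11} are local graph partitioning problems,'' i.e., it identifies them as instances of Definition~\ref{local} and invokes Theorem~\ref{th2}. Your additional paragraph explaining why Theorem~\ref{th1} does not apply is not needed for the corollary itself, but it is accurate and usefully clarifies why the weaker bound is stated.
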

Here also, simply observe that the problems mentioned in Corollary~\ref{cor11} are local graph partitioning problems.

Theorems~\ref{th1} and~\ref{th2} improve the $O^*(2^{(\Delta+1)k}$
$((\Delta+1)k)^{\log((\Delta+1)k)})$ time complexity for the
corresponding problems given in~\cite{Cai06randomseparation}
obtained there by the \textit{random separation} technique. Recall
that random separation consists of randomly guessing if a vertex is
in an optimal subset $V'$ of size $k$ (white vertices) or if it is
in $N(V') \setminus V'$ (black vertices). For all other vertices the
guess has no importance. As a right guess concerns at most only
$k+k\Delta$ vertices, it is done with high probability if we repeat
random guesses $f(k,\Delta)$ times with a suitable function~$f$.
Given a random guess, i.e., a random function $g:V \rightarrow
\{\text{white,black}\}$, a solution can be computed in polynomial
time by dynamic programming. Although random separation
(and {a fortiori} \textit{color coding}~\cite{Aloncolorcoding})  have also been applied to other problems than local graph partitioning ones, greediness-for-parameterization seems to be quite general and improves both
running time and easiness of implementation since our algorithms do
not need complex derandomizations.

Let us note that the greediness-for-parameterization technique can be even more general, by
enhancing the scope of Definition~\ref{local} and can be applied to
problems where the objective function takes into account not only
edges but also vertices. The {value} of a solution could be defined
as a function $\val{} : \mathcal P(V) \rightarrow \mathbb R$ such
that $\val{}(\emptyset)=0$, the contribution of a vertex $v$ in a
partial solution $T$ is $\delta(v,T)=\val{}(T \cup v) - \val{}(T)$.
Thus, for any subset $T$, $\val{}(T)=\val{}(T \setminus \{v_k\})+
\delta(v_k,T \setminus \{v_k\})$
 where $k$ is the size of $T$ and $v_k$ is the last vertex added to
the solution. Hence, $\val{}(T) =\Sigma_{1 \leqslant i \leqslant
k}\delta(v_i,\{v_1,\ldots,v_{i-1}\})+\val{}(\emptyset) =\Sigma_{1
\leqslant i \leqslant k}\delta(v_i,\{v_1,\ldots,v_{i-1}\})$. Now,
the only hypothesis we need to show Theorem~\ref{th2} is the
following: for each $T'$ such that $(N(T')\setminus T) \cap
(N(v)\setminus T)=\emptyset$, $\delta(v,T \cup T')=\delta(v,T)$.

Notice also that, that under such modification, \textsc{max $k$-dominating
set}, asking for a set~$V'$ of~$k$ vertices that dominate the
highest number of vertices in $V \setminus V'$ fulfils the
enhancement just discussed. We therefore derive the following.
\begin{corollary}
\textsc{max $k$-dominating set} can be solved in $O^*((\Delta k)^{2k}$.
\end{corollary}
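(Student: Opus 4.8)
The plan is to verify that \textsc{max $k$-dominating set} fits the ``enhanced'' framework sketched just before the statement, and then to simply quote the proof of Theorem~\ref{th2}. Concretely, I would first set up the value function: for $V'\subseteq V$ define $\val{}(V')$ to be the number of vertices in $V\setminus V'$ dominated by $V'$, i.e. $\val{}(V')=|N(V')\setminus V'|$. Then $\val{}(\emptyset)=0$ as required, and I compute the induced contribution $\delta(v,T)=\val{}(T\cup\{v\})-\val{}(T)$, which counts the vertices newly dominated when $v$ is added to $T$: this is the number of vertices in $\bigl(N[v]\setminus T\bigr)\setminus N[T]$ that are not in $T\cup\{v\}$ (one has to be a little careful about whether $v$ itself was previously counted as dominated and now leaves $V\setminus V'$; this is a routine bookkeeping point with a $+1/-1$ adjustment, all local to $N[v]$).

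Next I would check the single hypothesis needed for the proof of Theorem~\ref{th2} to go through verbatim: for every $T'$ with $(N(T')\setminus T)\cap(N(v)\setminus T)=\emptyset$, we have $\delta(v,T\cup T')=\delta(v,T)$. The point is that whether a neighbour $u$ of $v$ is ``newly dominated'' by adding $v$ depends only on whether $u$ already lies in $N[T]$ (and on whether $u\in T$); if $T'$ touches none of the vertices of $N(v)\setminus T$ — meaning no vertex of $N(v)\setminus T$ is in $T'$ nor adjacent to $T'$ — then adding $T'$ changes neither the membership of those vertices in the set nor their domination status as far as $v$'s marginal gain is concerned. Hence $\delta(v,\cdot)$ is ``local'' in exactly the sense the generalized framework requires, so the locality hypothesis holds.

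Given locality, I would run \texttt{ALG2} (branching on greedily chosen connected subsets $S_i$ of size $i\le k$ together with their neighbourhoods, exactly as in Theorem~\ref{th2}), and invoke the hybridation argument of that proof word for word: at a deviating node one replaces a maximal connected piece $H$ of $V'_{\mathrm{opt}}\setminus T$ by the greedily chosen $S=S_{|H|}$, uses $\val{}(T\cup S)\le \val{}(T\cup H)$ — here with $\ge$ and ``best'' for a maximization objective — and uses Lemma~\ref{indep} (whose proof only used that contributions are local, not the specific edge-counting formula) to conclude $\val{}(V_e)=\val{}(V'_{\mathrm{opt}})$ after adding back $V_n\setminus H$. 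Iterating at most $k$ times yields a leaf storing an optimal solution. The running-time bound $O^*((\Delta k)^{2k})$ is inherited unchanged, since Lemma~\ref{connectedinduced} and the $O(k^{2k})$ branching-tree size are purely combinatorial and independent of the objective.

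The main obstacle is not conceptual but notational: one must state a clean analogue of Lemma~\ref{indep} for the enhanced setting (the present Lemma~\ref{indep} is phrased in terms of $\alpha_1,\alpha_2$ and $|E(X,Y)|$), or, better, observe that its proof only invokes locality of $\delta$ and therefore already covers this case; and one must be slightly careful that $\val{}$ for domination is not simply ``a linear combination of edge-subset sizes,'' so the reduction is to the generalized statement discussed in the paragraph preceding the corollary rather than to Definition~\ref{local} itself. Once that is made explicit, the corollary follows immediately. I would therefore write the proof as: ``Define $\val{}(V')=|N(V')\setminus V'|$; this satisfies the locality hypothesis stated above (a newly dominated vertex of $N(v)$ depends only on $N[v]$ and $T\cap N[v]$), hence Theorem~\ref{th2} applies and gives the claimed bound.''
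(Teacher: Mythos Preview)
Your plan is exactly the paper's: the paper gives no proof beyond the one-line remark that \textsc{max $k$-dominating set} ``fulfils the enhancement just discussed,'' and you are spelling that out. So in approach you match the paper.

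That said, the step you label ``not conceptual but notational'' is where the argument is genuinely thin (in the paper too), and your sketch does not close it. First, the literal hypothesis $(N(T')\setminus T)\cap(N(v)\setminus T)=\emptyset$ does \emph{not} forbid a neighbour of $v$ from lying in $T'$; your paraphrase quietly strengthens the condition. With $T=\emptyset$, $T'=\{u\}$, $u\in N(v)$ and $N(u)\cap N(v)=\emptyset$, the stated condition holds, yet $\delta(v,\{u\})=|N(v)|-2\neq |N(v)|=\delta(v,\emptyset)$. Second, and more importantly, the analogue of Lemma~\ref{indep} is \emph{not} a consequence of first-neighbourhood locality: domination is a distance-$2$ phenomenon. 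Even under $N[A]\cap Y=N[B]\cap Y=\emptyset$, the marginal gain of $Y$ can differ because $N(A)\cap N(Y)$ and $N(B)\cap N(Y)$ need not coincide. Concretely, take $X=\emptyset$, $A=\{s\}$, $B=\{h_1\}$, $Y=\{h_2\}$, all pairwise non-adjacent, with $N(s)\subseteq N(h_2)$ and $N(h_1)\cap N(h_2)=\emptyset$; then $\val(\{s\})\ge\val(\{h_1\})$ is compatible with $\val(\{s,h_2\})<\val(\{h_1,h_2\})$, so the swap in the hybridation does not go through ``word for word.'' To make the argument rigorous for domination one needs a stronger separation (e.g., replace the pieces $S$ and $H$ only when they are at distance at least~$3$ from $V_n\setminus H$, and branch accordingly) or a different exchange argument; this is more than bookkeeping.
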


\begin{figure}
\begin{center}
\includegraphics[scale=.7]{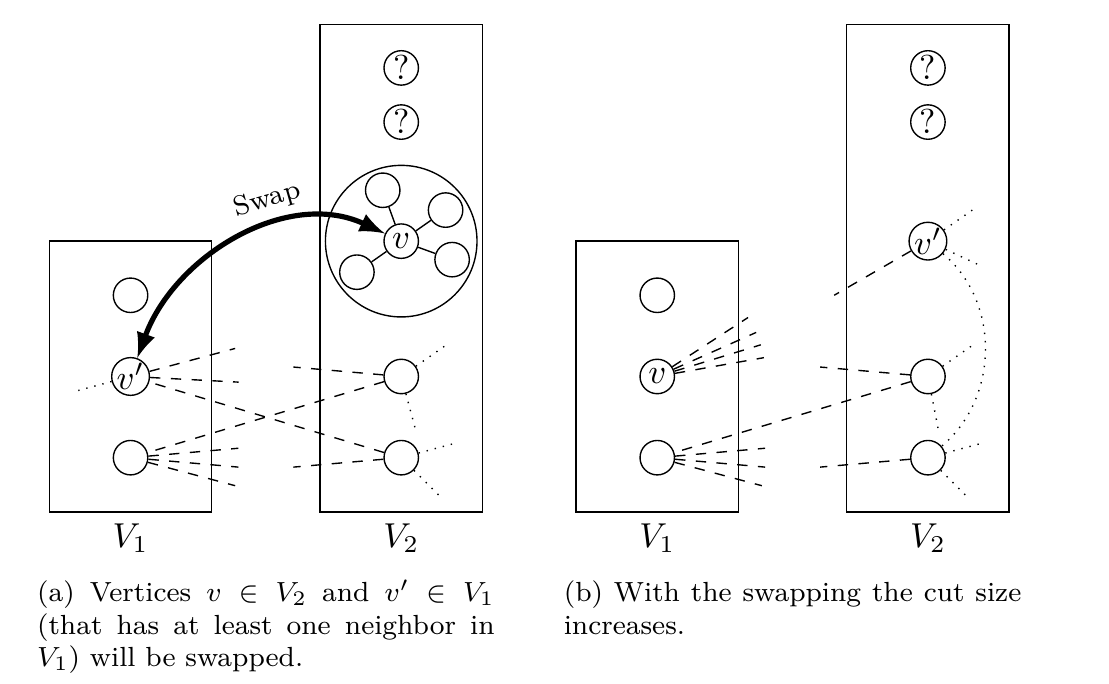}
\caption{Illustration of a swapping}
\label{exchange}
\end{center}
\end{figure}

\section{Standard parameterization for \textsc{max} and \textsc{min $\mathbf{(k,n-k)}$-cut}}\label{standardparam}

\subsection{\textsc{Max $\mathbf{(k,n-k)}$-cut}}

In the sequel, we use the standard notation $G[U]$ for any $U \subseteq V$
to denote the subgraph induced by the vertices of $U$.
In this section, we show that \textsc{max $(k,n-k)$-cut}
parameterized by the standard parameter, i.e., by the value~$p$ of
the solution, is FPT.
Using an idea of bounding above the value of an optimal solution by
a swapping process (see Figure~\ref{exchange}), we show that the non-trivial case satisfies $p > k$. We also show that $p > \Delta$
holds for non trivial instances and get the situation illustrated in~Figure~\ref{pkd}. The rest of the proof is an immediate application of Corollary~\ref{cor6}.
\begin{figure}
\begin{center}
%
%
\includegraphics[scale=.8]{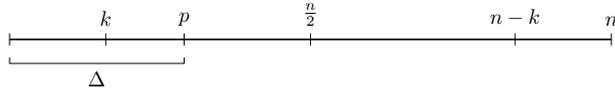}
\caption{Location of parameter $p$, relatively to $k$ and $\Delta$.}
\label{pkd}
\end{center}
\end{figure}
\begin{lemma}\label{reg}
In a graph with minimum degree $r$,
the optimal value $\mathrm{opt}$ of a \textsc{max} \knkcut{}
satisfies $\mathrm{opt} \geqslant \min\{n-k,rk\}$.
\end{lemma}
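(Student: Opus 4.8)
The plan is to build the set $V'$ greedily, one vertex at a time, while tracking two quantities for the current partial set $S$: the cut value $c(S):=|E(S,V\setminus S)|$, and the set $U(S)$ of vertices of $V\setminus S$ having no neighbour in $S$. The first fact I would record is a monotonicity observation: if $S\subseteq S'$ and $U(S)=\emptyset$, then $U(S')=\emptyset$ as well, because a neighbour in $S$ of a vertex outside $S'$ is also a neighbour in $S'$. Consequently, once $U(S)=\emptyset$ we have $c(S')\ge |V\setminus S'|$ for every $S'\supseteq S$, since each vertex outside $S'$ then contributes at least one cut edge.

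Next I would describe the greedy rule: while $|S|<k$, if $U(S)\ne\emptyset$ add any $v\in U(S)$ (a \emph{spreading} step), otherwise add an arbitrary $v\in V\setminus S$. The point of a spreading step is that $v$ has no neighbour in $S$, so no edge leaves the cut and $c(S\cup\{v\})=c(S)+\deg(v)\ge c(S)+r$; hence if the first $t$ steps are all spreading steps, then afterwards $c(S)\ge rt$. The process is well defined since $|S|\le k-1<n$ throughout, and a spreading step is available whenever $|S|<k$ and $U(S)\ne\emptyset$.

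Finally I would split into two cases on the number of spreading steps. If all $k$ steps are spreading steps, the produced set $V'$ satisfies $c(V')\ge rk\ge\min\{n-k,rk\}$. Otherwise there is a first step at which $U(S)=\emptyset$ with $|S|=t<k$; from the monotonicity observation, every later set keeps $U=\emptyset$, so the final set $V'$ of size $k$ satisfies $c(V')\ge |V\setminus V'|=n-k\ge\min\{n-k,rk\}$. In both cases $\opt\ge c(V')\ge\min\{n-k,rk\}$, as claimed. The only delicate point is that the cut may \emph{decrease} once we start adding dominated vertices (adding such a vertex can remove two incident cut edges), so one cannot simply keep summing degrees; the resolution is precisely that entering that phase already certifies the bound $n-k$, and this certificate is preserved when further vertices are added. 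The degenerate cases $r=0$ or $k\in\{0,n\}$ make the right-hand side equal to $0$ and are trivially consistent.
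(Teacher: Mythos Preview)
Your proof is correct, but it takes a different route from the paper's. The paper starts from an \emph{arbitrary} $(k,n-k)$-partition $(V_1,V_2)$ and improves it by \emph{swapping}: whenever some $v\in V_2$ has no neighbour in~$V_1$, it is exchanged with a vertex $v'\in V_1$ having fewer than $r$ neighbours in~$V_2$; such a swap raises the cut value by at least~$1$, and if no such $v'$ exists then every vertex of $V_1$ already contributes at least $r$ crossing edges, giving the $rk$ bound directly. Termination of the swapping process yields either the $rk$ bound or a partition in which every vertex of $V_2$ has a neighbour in $V_1$, hence the $n-k$ bound. Your argument instead \emph{grows} $S$ from $\emptyset$, always adding an undominated vertex while one exists; each spreading step adds a full degree $\geq r$ to the cut, and the monotonicity of domination under supersets cleanly locks in the $n-k$ bound once spreading stops. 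Your approach avoids the two-sided accounting of a swap (gain $\geq r$ versus loss $\leq r-1$) and is arguably more elementary; the paper's swap argument, by contrast, maintains the cardinality constraint $|V_1|=k$ throughout and is closer in spirit to the local-search and exchange techniques used elsewhere in the article.
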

\begin{proof} 
We divide arbitrarily the vertices of a graph $G=(V,E)$
into two subsets~$V_1$ and~$V_2$ of size~$k$
and $n-k$, respectively. Then, for every vertex $v \in V_2$, we check if~$v$ has a
neighbor in~$V_1$. If not, we try to swap~$v$ and a vertex $v' \in V_1$ which has strictly less than~$r$ neighbors in~$V_2$ (see
Figure~\ref{exchange}). If there is no such vertex, then every vertex in~$V_1$ has at least~$r$ neighbors in~$V_2$, so determining a cut of value at least~$rk$. When swapping is possible,
 as the minimum degree is~$r$ and the neighborhood of~$v$ is entirely contained
in~$V_2$, moving~$v$ from~$V_2$ to~$V_1$ will increase the value of the cut by at least~$r$.
On the other hand, moving~$v'$ from~$V_1$ to~$V_2$ will reduce the value
 of the cut by at most $r-1$. In this way, the value of the cut
 increases by at least~1.

Finally, either the process has
reached a cut of value $rk$ (if no more swap is possible),  or every vertex in $V_2$ has increased the value of the cut by at least~1
(either immediately, or after a swapping process), which results in a cut of
value at least $n-k$, and the proof of the lemma is completed.
\end{proof}
\begin{corollary}\label{cormin}
In a graph with no isolated vertices, the optimal value for \textsc{max $(k,n-k)$-cut} is at least
$\min\{n-k,k\}$.
\end{corollary}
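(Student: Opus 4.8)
The statement is an immediate specialization of Lemma~\ref{reg}. The plan is to observe that a graph $G=(V,E)$ with no isolated vertices has minimum degree $r\geq 1$: every vertex has at least one neighbour. Hence the hypothesis of Lemma~\ref{reg} is satisfied with the value $r=1$ (the lemma's proof only uses that \emph{every} vertex has degree at least $r$, so it applies verbatim with $r=1$ even when the true minimum degree is larger). Plugging $r=1$ into the conclusion of Lemma~\ref{reg} gives $\opt \geq \min\{n-k, 1\cdot k\} = \min\{n-k,k\}$, which is exactly what is claimed.

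First I would state the reduction to Lemma~\ref{reg} explicitly, to make clear that no new swapping argument is needed: the whole combinatorial content — dividing $V$ into parts of size $k$ and $n-k$ and improving the cut by local swaps — is already carried out in the proof of Lemma~\ref{reg}, and we are simply instantiating its parameter. Then I would remark that one could equally well take $r$ to be the actual minimum degree $\delta(G)$, obtaining the (formally stronger but here unneeded) bound $\opt \geq \min\{n-k,\delta(G)k\}$; the weaker bound with $r=1$ suffices for the corollary and is all we will use in the sequel when establishing that the non-trivial case of \textsc{max $(k,n-k)$-cut} satisfies $p>k$.

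There is essentially no obstacle here; the only thing to be careful about is the logical direction of the inequality inside $\min\{\cdot,\cdot\}$: since $r\geq 1$ we have $rk \geq k$, so $\min\{n-k,rk\}\geq\min\{n-k,k\}$, and in particular the specialization to $r=1$ already yields the sharpest form of the corollary. I would therefore keep the proof to one or two sentences, citing Lemma~\ref{reg} with $r=1$.
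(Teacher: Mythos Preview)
Your proposal is correct and matches the paper's approach: the corollary is stated without its own proof precisely because it follows immediately from Lemma~\ref{reg} by taking $r=1$ (or, equivalently, by using the actual minimum degree $r\geq 1$ and the monotonicity $\min\{n-k,rk\}\geq\min\{n-k,k\}$ you point out). There is nothing to add or change.
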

Then, Corollary~\ref{cor6} suffices to conclude the proof of the the following theorem.
\begin{theorem}\label{maxFPTp}
The \textsc{max $(k,n-k)$-cut} problem parameterized by the standard parameter~$p$ is FPT.
\end{theorem}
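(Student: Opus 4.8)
The plan is to show that \textsc{max $(k,n-k)$-cut} parameterized by $p$ reduces, in polynomial time, to the case $p > k$ and $p > \Delta$, which is then settled by the $O^*(\Delta^k)$ algorithm of Corollary~\ref{cor6}. First I would observe that isolated vertices are irrelevant: they never contribute to a cut, so we may assume the input graph has no isolated vertex (if $k$ exceeds the number of non-isolated vertices, the answer depends only on how many isolated vertices we are forced to include, which is a trivial check). Under this assumption Corollary~\ref{cormin} gives $\opt \geqslant \min\{n-k, k\}$. Hence if $k \leqslant n-k$, i.e. $k \leqslant n/2$, we already have $\opt \geqslant k$, so whenever the target satisfies $p \leqslant k$ the instance is a trivial yes-instance; and if $k > n/2$ we may instead look for the complementary set of size $n-k < n/2$ (the cut value $m_2=|E(V',V\setminus V')|$ is symmetric under complementation), so the same argument gives $\opt \geqslant n-k$ and again $p \leqslant n-k$ is trivial. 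Either way, after this preprocessing we may assume $p > \min\{k, n-k\}$; since we can always work with the smaller side, we may assume $p > k$ (replacing $k$ by $n-k$ if needed, which only helps because the running time $O^*(\Delta^k)$ is then $O^*(\Delta^{n-k})$, also fine as $n-k < p$).

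Next I would dispatch the case $p \leqslant \Delta$. Here the idea is that a single high-degree vertex already yields a large cut: if $v$ has degree $\Delta$, put $v$ on the small side together with $k-1$ arbitrary other vertices chosen to avoid $N(v)$ as much as possible — more carefully, one shows that an optimal (or near-optimal) cut has value at least $\Delta$ whenever $k \geqslant 1$ and the graph is connected enough, or one argues directly that $\opt \geqslant \Delta - (k-1)$ by isolating a max-degree vertex, and combines this with the bound $\opt \geqslant k$ from the previous paragraph: from $\opt \geqslant k$ and (say) $\opt \geqslant \Delta-k+1$ we get $\opt \geqslant (\Delta+1)/2$, and a slightly sharper accounting — separating the max-degree vertex and greedily adding the remaining $k-1$ vertices each contributing at least one new cut edge (possible unless the graph is a near-clique, a case handled separately) — yields $\opt \geqslant \Delta$. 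So $p \leqslant \Delta$ is again a trivial yes-instance. Combining the two reductions, the only surviving instances satisfy $p > k$ and $p > \Delta$, whence $\Delta^k < p^p$, a function of $p$ alone, and Corollary~\ref{cor6} solves the instance in $O^*(\Delta^k) = O^*(p^p)$ time, i.e. $f(p)\,|I|^{O(1)}$; this establishes that \textsc{max $(k,n-k)$-cut} parameterized by $p$ is FPT.

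The step I expect to be the main obstacle is the $p \leqslant \Delta$ reduction: the crude bound $\opt \geqslant \Delta - k + 1$ from isolating one vertex is not by itself enough (it could be small if $k$ is close to $\Delta$), so the argument must genuinely combine it with $\opt \geqslant k$ and handle the boundary cases (dense graphs, small $n$) where greedily extending the small side without creating new cut edges is impossible. Likely the cleanest route is: take the max-degree vertex $v$, and build $V'$ by adding $v$ and then repeatedly a vertex outside the current $N[V']$ — each such addition keeps all of $v$'s $\Delta$ edges in the cut and adds no obstruction; this works as long as $V \setminus N[V']$ stays nonempty, i.e. as long as $|N[V']| < n$, and one checks that $k$ steps of this fit because otherwise $n$ is small and the problem is solvable by brute force in time depending only on $n \leqslant$ some function of $\Delta$ and $k$, both bounded by $p$. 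Once that case analysis is pinned down, the rest is the bookkeeping already sketched, and the theorem follows immediately from Corollary~\ref{cor6}.
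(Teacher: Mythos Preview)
Your overall strategy coincides with the paper's: remove isolated vertices, use Corollary~\ref{cormin} together with the symmetry $V' \leftrightarrow V\setminus V'$ to reduce to $p>k$, then bound $\Delta$ in terms of~$p$, and finally invoke the $O^*(\Delta^k)$ algorithm of Corollary~\ref{cor6}. That is exactly the plan announced just before Theorem~\ref{maxFPTp}.

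Where you go astray is in the treatment of~$\Delta$. You correctly note the trivial bound $\opt \geq \Delta-(k-1)$ (place a maximum-degree vertex in~$V'$ and pad arbitrarily), and you correctly combine it with $\opt\geq k$ to obtain $\opt\geq(\Delta+1)/2$. But you then dismiss this as insufficient and try to sharpen it to $\opt\geq\Delta$, which is \emph{false}: in the star $K_{1,n-1}$ with $2\leq k\leq n-2$ one has $\opt=\max\{k,n-k\}\leq n-2<n-1=\Delta$. Your fallback construction (grow an independent set in~$V'$ starting from a max-degree vertex, else declare $n$ small and brute-force) is circular: bounding $n$ by a function of $k$ and $\Delta$ only helps once $\Delta$ is already bounded by a function of~$p$, which is precisely what you are trying to establish.

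The good news is that you do not need $\opt\geq\Delta$ at all. From $\opt\geq\Delta-(k-1)$ alone: either $p\leq\Delta-(k-1)$ and the instance is a yes-instance, or $\Delta<p+k-1<2p-1$ (using the already-secured $k<p$). Hence every surviving instance satisfies $\Delta\leq 2p-2$ and $k\leq p-1$, so Corollary~\ref{cor6} runs in time $O^*(\Delta^k)=O^*((2p)^p)$, which is the desired FPT bound in~$p$. (The paper's informal narrative also says ``$p>\Delta$'', but the star example shows this cannot hold literally; the weaker bound $\Delta<2p$ is what the argument actually delivers, and it is all that Corollary~\ref{cor6} needs.)
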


\subsection{\textsc{Min $\mathbf{(k,n-k)}$-cut}}

Unfortunately, unlike what have been done for
\textsc{max $(k,n-k)$-cut}, we have not been able to show until now
that the case $p<k$ is ``trivial''. So, Algorithm~\texttt{ALG2} in Section~\ref{greedy-param} cannot be transformed into a standard FPT algorithm for this problem.

However, we can prove that when
$p\geqslant k$, then \textsc{min $(k,n-k)$-cut} parameterized by the
value $p$ of the solution is FPT. This is an immediate corollary of
the following proposition.
\begin{proposition}\label{minFPTp}
\textsc{min $(k,n-k)$-cut} parameterized by $p+k$ is FPT.
\end{proposition}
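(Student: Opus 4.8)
The plan is to reduce, in polynomial time, an instance of \textsc{min $(k,n-k)$-cut} to a (slightly generalized, vertex‑weighted) local partitioning instance on a graph of maximum degree less than $k+p$, and then to invoke Theorem~\ref{th2}.

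First I would observe a simple degree bound. In any feasible solution $V'$ (a set of $k$ vertices cutting at most $p$ edges), every vertex $v\in V'$ has degree at most $(k-1)+p$: at most $k-1$ of its incident edges have their other endpoint inside $V'$, and each of the remaining incident edges crosses the cut, so there are at most $p$ of them. Hence, writing $H=\{v\in V:\deg_G(v)\ge k+p\}$ and $L=V\setminus H$, every feasible solution satisfies $V'\subseteq L$, and $G[L]$ has maximum degree at most $k+p-1$. (If $|L|<k$ we immediately answer ``no''.)

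Next I would absorb $H$ into fixed vertex weights. For $V'\subseteq L$ the cut value splits as $|E(V',V\setminus V')|=|E_{G[L]}(V',L\setminus V')|+\sum_{v\in V'}w(v)$, where $w(v):=|E_G(\{v\},H)|$ depends only on $v$ and on $G$, not on $V'$, since $H$ is determined by the degree sequence alone. So minimizing the cut over $k$‑subsets of $L$ is exactly minimizing, on $G[L]$, the cut value plus a fixed additive weight on the chosen vertices. This is precisely the ``objective taking into account not only edges but also vertices'' extension discussed after the proof of Theorem~\ref{th2}, with $\val{}(\emptyset)=0$. I would then check the required locality hypothesis: here $\delta(v,T)=w(v)+\delta_{\mathrm{cut}}(v,T)$ where $\delta_{\mathrm{cut}}$ is the contribution of $v$ to the cut of $G[L]$; since $w(v)$ is constant and $\delta_{\mathrm{cut}}(v,T)$ depends only on $T\cap N_{G[L]}(v)$, we get $\delta(v,T\cup T')=\delta(v,T)$ whenever $(N_{G[L]}(T')\setminus T)\cap(N_{G[L]}(v)\setminus T)=\emptyset$, which is what Theorem~\ref{th2} needs.

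Finally I would run the algorithm of Theorem~\ref{th2} on this instance of $G[L]$. With $\Delta\le k+p-1$ it runs in $O^*\bigl(((k+p)k)^{2k}\bigr)$, which is FPT with respect to $k+p$; and by the first step its answer coincides with the answer for the original instance. The only real content is the degree argument together with the observation that the weights $w(v)$ are constants, so I do not anticipate a genuine obstacle — the main piece of work is simply spelling out the additive‑vertex‑weight version of Theorem~\ref{th2} (which the paper only sketches) and verifying its hypothesis as above.
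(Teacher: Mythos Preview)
Your proposal is correct and follows essentially the same route as the paper: both observe that any vertex of degree at least $k+p$ is forced into $V\setminus V'$, then invoke Theorem~\ref{th2} on the low-degree part where $\Delta<k+p$. Your version is more explicit in handling the edges to high-degree vertices via the additive vertex-weight reduction, whereas the paper simply asserts that the branching of Theorem~\ref{th2} can be confined to the low-degree subgraph while values are still computed in~$G$.
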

\begin{proof}
Each vertex $v$ such that $|N(v)| \geqslant k+p$ has to be in $V
\setminus V'$ (of size $n-k$). Indeed, if one puts~$v$ in~$V'$ (of
size $k$), among its $k+p$ incident edges, at least $p+1$ leave from
$V'$; so, it cannot yield a feasible solution. All the vertices $v$
such that $|N(v)| \geqslant k+p$ are then rejected. Thus, one can
adapt the FPT algorithm in $k+\Delta$ of Theorem~\ref{th2} by
considering the $k$-neighborhood of a vertex $v$ not in the whole
graph~$G$, but in~$G[T \cup U]$. One can easily check that the
algorithm still works and since in those subgraphs the degree is
bounded by $p+k$ we get an FPT algorithm in $p+k$.
\end{proof}
In~\cite{Feigeminkcut}, it is shown that, for any $\varepsilon > 0$,
there exists a randomized $(1+\frac{\varepsilon k} {\log
n})$-approximation for \textsc{min $(k,n-k)$-cut}. From this result,
we can easily derive that when $p < \frac{\log n}{k}$  then the
problem is solvable in polynomial time (by a randomized algorithm).
Indeed, fixing $\varepsilon=1$, the algorithm in~\cite{Feigeminkcut}
is a $(1+\frac{k}{\log(n)})$-approximation. This approximation ratio
is strictly better than $1+\frac{1}{p}$. This means that the
algorithm outputs a solution of value lower than $p+1$, hence at
most $p$, if there exists a solution of value at most $p$.

We now conclude this section by claiming that, when $p \leqslant k$, \textsc{min $(k,n-k)$-cut} can be solved in time~$O^*(n^p)$.
\begin{proposition}\label{n^p}
If $p \leqslant k$, then \textsc{min $(k,n-k)$-cut} can be solved in time~$O^*(n^p)$.
\end{proposition}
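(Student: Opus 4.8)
The plan is to exploit the fact that when $p \leqslant k$, any feasible solution of value at most $p$ cuts at most $p$ edges, and hence at most $p$ vertices of $V'$ have a neighbour outside $V'$. Call these the \emph{boundary vertices} of $V'$; all other vertices of $V'$ are \emph{interior}, meaning their entire neighbourhood lies inside $V'$. The key structural observation is that if we knew the set $B$ of boundary vertices (a set of size at most $p$, together with the cut edges incident to them), then the rest of $V'$ decomposes along the connected components of $G \setminus B$: for each such component $C$, either $C \subseteq V'$ entirely, or $C \cap V' = \emptyset$ entirely, since any component that is split would contribute a cut edge not incident to $B$, contradicting that $B$ contains \emph{all} boundary vertices. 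So once $B$ is fixed, the problem reduces to choosing a sub-collection of the components of $G \setminus B$ whose total size is exactly $k - |B|$, and this is an instance of a subset-sum / knapsack-type feasibility question on the component sizes, solvable in polynomial time by dynamic programming over the target value $k - |B| \leqslant n$.

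The steps, in order, would be: \textbf{(1)} Enumerate all candidate boundary sets. A first attempt is to guess the at most $p$ cut edges directly, but a cleaner route is to guess the set $F$ of cut edges: there are at most $\binom{|E|}{\leqslant p} = O^*(n^{2p})$ such guesses, or, better, guess only the $\leqslant p$ vertices of $B$ and then, for each, guess which of its incident edges are cut — but to keep the bound at $O^*(n^p)$ we instead guess the set $B$ of boundary vertices itself, $|B| \leqslant p$, giving $O^*(n^p)$ choices. \textbf{(2)} For a fixed $B$, verify consistency: compute the connected components of $G \setminus B$; we need every vertex of $B$ that we intend to place in $V'$ to actually have at least one neighbour outside the chosen $V'$ (so that it really is a boundary vertex and not an interior one — though note placing extra ``boundary'' vertices with no external neighbour is harmless for the upper-bound direction, so this check can be relaxed). \textbf{(3)} Decide, by dynamic programming on the multiset of component sizes of $G \setminus B$, whether some sub-collection sums to exactly $k - |B|$; simultaneously one tracks the number of cut edges induced, which is determined once $B$ and the choice of components are fixed, and checks it is at most $p$. \textbf{(4)} Output YES if any guess of $B$ succeeds.

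The main obstacle is getting the counting right so that the overall running time is genuinely $O^*(n^p)$ rather than $O^*(n^{2p})$ or worse: guessing cut edges costs $n^{2p}$, so one must argue that guessing the $\leqslant p$ boundary \emph{vertices} suffices and that, given $B$, the number of cut edges is a function of which components are taken (it is: an edge is cut iff exactly one endpoint ends up in $V'$, and with $B$ fixed and each component all-in-or-all-out, this count is fully determined), so no further guessing is needed. A secondary subtlety is the exactness constraint $|V'| = k$: this is precisely what the subset-sum dynamic programming handles, and since component sizes and the target are bounded by $n$, the DP runs in $O^*(1)$ time per guess. Putting these together yields total time $O^*(n^p)$, and one should double-check the edge cases where $B$ itself spans several components or where $k - |B|$ is negative (guess invalid, discard).
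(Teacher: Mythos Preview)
Your proposal is correct and follows essentially the same route as the paper: guess the at most $p$ ``boundary'' vertices of $V'$ (the paper calls this set $\widetilde{V}$), observe that every connected component of $G\setminus\widetilde{V}$ is either entirely inside or entirely outside $V'$, and then run a polynomial-time dynamic program over the components to hit the target size $k-|\widetilde{V}|$ while optimising the cut value. The paper phrases the DP objective as \emph{maximising} $\sum_{C_i\in C'}\alpha_i$ (the number of $\widetilde{V}$--$C_i$ edges absorbed into $V'$), which is equivalent to your ``track the number of cut edges and keep it at most $p$''.
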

\begin{proof}
Since $p \leqslant k$, there exist in the optimal set~$V'$, $p' \leqslant p$ vertices incident to the~$p$ outgoing edges. So,  the $k-p'$ remaining vertices of~$V'$ induce a subgraph that is
disconnected from~$G[V \setminus V']$.

Hence, one can enumerate all the $p' \leqslant p$ subsets of~$V$. For
each such subset~$\widetilde{V}$, the graph~$G[V \setminus
\widetilde{V}]$ is disconnected. Denote by $C=(C_i)_{0\leqslant i
\leqslant |C|}$ the connected components of~$G[V \setminus
\widetilde{V}]$ and by~$\alpha_i$ the number of edges between $C_i$ and
$\widetilde{V}$. We have to pick a subset $C' \subset C$ among these
components such that $\sum_{C_i \in C'} |C_i| = k-p'$ and
maximizing~$\sum_{C_i\in C'}\alpha_i$. This can be done in polynomial time using standard
dynamic programming techniques.
\end{proof}

\section{Other parameterizations}

\subsection{Parameterization by~$k$ and approximation of \textsc{max} and \textsc{min} \textsc{$\mathbf{(k,n-k)}$-cut}}\label{paramapprox}

Recall that both \textsc{max} and \textsc{min $(k,n-k)$-cut} parameterized by $k$
are W[1]-hard~\cite{Downey03cuttingup,cai}. In this section, we give some approximation algorithms working in FPT time with respect to parameter~$k$. 
\begin{proposition}\label{k-approx-max&min}
\textsc{max $(k,n-k)$-cut}, parameterized by~$k$ has a fixed-parameter approximation schema. On the other hand, \textsc{min $(k,n-k)$-cut} parameterized by~$k$ has a randomized fixed-parameter approximation schema.
\end{proposition}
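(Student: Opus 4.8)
The plan is to obtain both schemata by reducing, within FPT time in $k$, to the \emph{standard-parameter} situation already handled. Fix the desired accuracy $\varepsilon > 0$ and set $p = \lceil k/\varepsilon \rceil$ (for the maximization case) or an analogous threshold for minimization. For \textsc{max $(k,n-k)$-cut}, I would first dispose of the ``easy'' regime: if the optimum is large relative to $k$, a crude procedure already gives a good approximation. Concretely, recall from Corollary~\ref{cormin} that in a graph with no isolated vertices the optimal value is at least $\min\{n-k,k\}$; more usefully, an arbitrary balanced partition together with the swapping argument of Lemma~\ref{reg} yields a cut of value at least $\min\{n-k, rk\}$ where $r$ is the minimum degree. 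So if $\opt \geq k/\varepsilon$, outputting any cut of value $\geq k$ (or better, the swapped cut) already achieves ratio $1-\varepsilon$ — here the point is that the number of edges ``missed'' at the $k$ chosen vertices is at most $k\Delta$, but we can instead bound the loss by $k$ via the swapping process, so any near-balanced greedy cut is within an additive $k$ of $\opt$, hence multiplicatively within $1-\varepsilon$ when $\opt \geq k/\varepsilon$. In the remaining regime $\opt < k/\varepsilon = p$, we are exactly in the situation where the standard parameter $p$ is bounded by a function of $k$ and $\varepsilon$, so Theorem~\ref{maxFPTp} (equivalently Corollary~\ref{cor6}, after checking $p > k$ and $p > \Delta$ as in Section~\ref{standardparam}) solves the problem exactly in FPT time in $p$, hence in FPT time in $k$ (for fixed $\varepsilon$). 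Either way we return a solution of value at least $(1-\varepsilon)\opt$, proving the first half.

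For \textsc{min $(k,n-k)$-cut} the structure is symmetric but the ``large optimum'' case must be handled by a \emph{randomized} routine, which is where the schema becomes randomized. If $\opt > k/\varepsilon$ (so $\opt$ is large), I would invoke the randomized $(1 + \varepsilon' k/\log n)$-approximation of Feige and Krauthgamer~\cite{Feigeminkcut}: tuning $\varepsilon'$ this gives ratio $1+\varepsilon$ whenever $k/\log n$ is small, and when $k/\log n$ is \emph{not} small, i.e. $n \leq 2^{O(k/\varepsilon)}$, the instance size is itself bounded by a function of $k$ and $\varepsilon$, so brute-force enumeration of all $\binom{n}{k}$ vertex subsets runs in FPT time in $k$. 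In the complementary regime $\opt \leq k/\varepsilon = p$, apply Proposition~\ref{n^p}: since then $p \leq k$, \textsc{min $(k,n-k)$-cut} is solved exactly in time $O^*(n^p)$ — but $n^p$ need not be FPT in $k$ alone, so here I would instead use Proposition~\ref{minFPTp}, which gives an exact FPT algorithm in $p+k$, hence FPT in $k$ once $p \leq k/\varepsilon$ is a function of $k$. Combining the two regimes yields a randomized algorithm returning a cut of value at most $(1+\varepsilon)\opt$ in FPT time in $k$.

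The main obstacle I anticipate is the case analysis on which regime an instance falls into, since we do not know $\opt$ in advance: the clean fix is to run \emph{both} branches — the FPT-exact branch (valid when $\opt$ is small) and the approximation branch (valid when $\opt$ is large) — and return the better answer, after verifying that in each branch the algorithm either certifiably solves the instance or its output is provably within the target ratio. One must be careful that the exact FPT algorithms of Theorem~\ref{maxFPTp} / Proposition~\ref{minFPTp} are invoked with the \emph{guessed} bound $p = \lceil k/\varepsilon\rceil$ on the optimum rather than the true value, and that they correctly report infeasibility (``no solution of value $\leq p$'') when the guess is too small, at which point we know $\opt > p$ and switch to the other branch. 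A secondary technical point is making the threshold between ``$n$ large, apply~\cite{Feigeminkcut}'' and ``$n$ small, brute force'' quantitatively consistent with the chosen $\varepsilon'$; this is a routine computation comparing $\varepsilon' k/\log n$ with $\varepsilon$ and is not conceptually hard.
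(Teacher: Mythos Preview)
Your argument for \textsc{min $(k,n-k)$-cut} is essentially correct, though more elaborate than necessary: the paper dispenses with the case split on $\opt$ entirely and simply compares $k$ with $\log n$. If $k<\log n$, the randomized PTAS of~\cite{Feigeminkcut} applies directly; if $k\geq \log n$, then $n\leq 2^{k}$ and brute-force enumeration of all $\binom{n}{k}\leq 2^{k^2}$ subsets is already FPT in~$k$. Your detour through Proposition~\ref{minFPTp} for the ``small $\opt$'' regime is not wrong, but it is redundant.

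Your argument for \textsc{max $(k,n-k)$-cut}, however, has a genuine gap in the ``large $\opt$'' regime. You assert that ``the swapping process'' produces a cut within an additive $k$ of $\opt$, hence within ratio $1-\varepsilon$ once $\opt\geq k/\varepsilon$. But Lemma~\ref{reg} says nothing of the sort: it only guarantees a cut of value at least $\min\{n-k,\,rk\}$, which is an \emph{absolute} lower bound, not a bound on the gap to $\opt$. A cut of value $\geq k$ when $\opt\geq k/\varepsilon$ gives ratio $\varepsilon$, not $1-\varepsilon$; there is no mechanism in your sketch that controls $\opt - \mathrm{sol}$ by~$k$.

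The paper's approach is different and avoids this issue: take $V'$ to be the $k$ highest-degree vertices, with degree sum $B=\sum_{i=1}^{k}d_i\geq \opt$. Since $V'$ spans at most $\binom{k}{2}$ inner edges, the cut $(V',V\setminus V')$ has value at least $B-k^{2}\geq \opt - k^{2}$, so the ratio is at least $1-k^{2}/B\geq 1-k^{2}/\Delta$. The case split is then on $\Delta$ versus $k^{2}/\varepsilon$: if $\Delta>k^{2}/\varepsilon$ this greedy set is already a $(1-\varepsilon)$-approximation; otherwise $\Delta\leq k^{2}/\varepsilon$ and the exact $O^{*}(\Delta^{k})$ algorithm of Corollary~\ref{cor6} runs in time $O^{*}((k^{2}/\varepsilon)^{k})$. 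The key quantity controlling the split is $\Delta$, not $\opt$, and the additive loss is $k^{2}$, not $k$.
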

\begin{proof}
We first handle \textsc{max} \knkcut{}. Fix some $\varepsilon > 0$. Given a graph $G=(V,E)$, let $d_1
\leqslant d_2 \leqslant \ldots \leqslant d_k$ be the degrees of the
$k$ largest-degree vertices $v_1, v_2, \ldots v_k$ in $G$. An optimal
solution of value~opt is obviously bounded from above by $B
=\Sigma_{i=1}^k d_i$. Now, consider solution
$V'=\{v_1,v_2,\ldots,v_k\}$. As there exist at most $k(k-1)/2
\leqslant k^2/2$ (when $V'$ is a $k$-clique) inner edges,
solution~$V'$ has a value~sol at least $B-k^2$. Hence, the
approximation ratio is at least $\frac{B-k^2}{B} = 1 - \frac
{k^2}{B}$. Since, obviously, $B \geqslant d_1 = \Delta$, an approximation ratio at least $1 - \frac{k^2}{\Delta}$ is immediately derived.

If $\varepsilon \geqslant \frac{k^2}{\Delta}$ then $V'$ is a $(1-\varepsilon)$-approximation. Otherwise, if $\varepsilon \leqslant \frac{k^2}{\Delta}$, then $\Delta \leqslant \frac{k^2}{\varepsilon}$. So, the branching algorithm of Theorem~\ref{maxFPTp} with time-complexity~$O^*(\Delta^k)$ is in this case an $O^*(\frac{k^{2k}}{\varepsilon^k})$-time algorithm.

For \textsc{min} \knkcut{}, it is proved in~\cite{Feigeminkcut} that, for  $\varepsilon > 0$, if $k<\log n$, then there exists a randomized polynomial time $(1+\varepsilon)$-approximation. Else, if $k>\log n$, the exhaustive enumeration of the $k$-subsets takes time $O^*(n^k)=O^*((2^k)^k)=O^*(2^{k^2})$.
\end{proof}
Finding approximation algorithms that work in FPT time with respect
to parameter~$p$ is an interesting question. Combining  the result
of~\cite{Feigeminkcut} and an $O(\log^{1.5}(n))$-approximation
algorithm in~\cite{feige} we can show that the problem is~$O(k^{3/5})$ approximable in polynomial time by a randomized
algorithm. But, is it possible to improve this ratio when allowing FPT
time (with respect to $p$)?

\subsection{Parameterization by the treewidth and the vertex co\-ver number}\label{otherparam}

When dealing with parameterization of graph problems, some classical
parameters arise naturally. One of them, very frequently used in the
fixed parameter literature is the treewidth of the graph.

It has already been proved that \textsc{min} and \textsc{max $(k,n-k)$-cut}, as well as \textsc{$k$-densest subgraph} can be solved in $O^*(2^{\mathrm{tw}})$~\cite{kdensestwalcom,Kloks94}. We show here that the algorithm in~\cite{kdensestwalcom} can be adapted to handle the whole class of local problems, deriving so the following result.
\begin{proposition}\label{tw}
Any local graph partitioning problem can be solved in time $O^*(2^{\mathrm{tw}})$.
\end{proposition}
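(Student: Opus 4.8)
The plan is to adapt the standard tree-decomposition dynamic programming for bisection-type problems, as in~\cite{kdensestwalcom}, to the full class $\mathcal L(\goal{},\alpha_1,\alpha_2)$. Fix a nice tree decomposition $(\{X_t\}_{t},\mathcal T)$ of width $\mathrm{tw}$, rooted arbitrarily. For each bag $X_t$ I would maintain a table indexed by a subset $S \subseteq X_t$ (the vertices of the current partial solution $V'$ that lie in the bag) together with two integers: $j$, the number of solution vertices selected so far in the subtree rooted at $t$, and $q$, the partial value accumulated so far. The table entry records whether such a partial configuration is realizable. The crucial point for fitting into $O^*(2^{\mathrm{tw}})$ is that the table is indexed by $2^{|X_t|}$ choices of $S$ and by polynomially many values of $j$ (at most $k \le n$) and $q$ (polynomially bounded, since $\val{}$ is a bounded linear combination of edge counts, hence $O(n^2)$ possible values); the running time per bag is therefore $O^*(2^{\mathrm{tw}})$.

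The main subtlety — and what makes the linear-combination form $\alpha_1 m_1 + \alpha_2 m_2$ work exactly as in the pure-cut case — is correctly attributing each edge's contribution exactly once. I would handle this with the usual convention for edge-weighted bag DP: an edge $uv$ contributes to the value at the (unique) topmost bag containing both $u$ and $v$; equivalently, introduce-edge nodes if one uses the ``nice tree decomposition with edge-introduce nodes'' variant. When the edge $uv$ is processed, its contribution to $q$ is $\alpha_1$ if both $u,v \in S$ (it is an internal edge of $V'$), $\alpha_2$ if exactly one of them is in $S$ (it crosses the cut), and $0$ otherwise. The transitions at leaf, introduce-vertex, introduce-edge, forget, and join nodes are then routine: at a forget node we project out the forgotten vertex (taking, for each remaining index, the best — i.e.\ min or max according to $\goal{}$ — over the two membership choices of the forgotten vertex), and at a join node we combine two children by matching their common $S$ and adding the $j$'s and $q$'s (no edge is double-counted because each edge is charged at exactly one introduce-edge node, which lies below exactly one child of the join). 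At the root, we read off whether some entry with $j=k$ and value $q \ge p$ (resp.\ $q \le p$) is realizable.

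The one genuinely delicate step is the \emph{exact cardinality constraint} $|V'| = k$: this is precisely the feature (equality of two set sizes) that puts these problems outside MSO$_2$, so it cannot be swept under Courcelle's theorem and must be tracked explicitly. But in the DP it costs only the extra index $j$ ranging over $\{0,1,\dots,k\}$, contributing only a polynomial factor, so it does not affect the $2^{\mathrm{tw}}$ base. I would spell out the recurrences for the five node types, verify by induction on the tree that the table at $t$ exactly records the set of achievable triples $(S,j,q)$ over the subtree (the invariant being that edges with both endpoints already forgotten have been fully accounted, edges with at least one endpoint still in $X_t$ have been accounted iff they have been introduced), and conclude. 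The only thing to be careful about in the write-up is stating the edge-charging convention cleanly; the arithmetic of the transitions is entirely mechanical, so I would not belabor it.
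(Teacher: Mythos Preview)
Your proposal is correct and follows essentially the same approach as the paper: a standard dynamic program over a nice tree decomposition, indexed by the $2^{|X_t|}$ intersections $S=V'\cap X_t$ together with a polynomially bounded cardinality counter, yielding $O^*(2^{\mathrm{tw}})$.

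The only noteworthy organizational difference is how edges are charged. The paper uses the classical four-node nice decomposition (no introduce-edge nodes): it accounts for all edges incident to $v$ inside the bag at the introduce-vertex node via $\delta_v=\alpha_1|E(\{v\},X_{i,t})|+\alpha_2|E(\{v\},X_{i,r})|$, and then at a join node it subtracts the doubly counted bag contribution $\delta_{\vec c}=\alpha_1|E(X_{i,t})|+\alpha_2|E(X_{i,t},X_{i,r})|$. Your introduce-edge convention avoids this correction term at joins, which is arguably cleaner. Conversely, the paper stores the \emph{optimal} value per $(\vec c,k')$ rather than a boolean over all achievable values $q$, which keeps the table smaller; your ``best over the two membership choices'' at forget nodes is phrased for that optimization variant, not for the boolean-realizability table you described, so in the write-up pick one and be consistent. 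One more cosmetic point: since $\alpha_1,\alpha_2\in\mathbb{R}$, the partial value $q$ need not be an integer; what is polynomially bounded is the pair $(m_1,m_2)$ of edge counts, so index by that pair (or simply store the optimum as the paper does).
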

\begin{proof}
A tree decomposition of a graph~$G(V,E)$ is a pair~$(X,T)$ where~$T$
is a tree on vertex set~$N(T)$ the vertices of which are called
nodes and $X = (\{X_i: i \in N(T)\})$ is a collection of subsets
of~$V$ such that: (i)~$\cup_{i \in N(T)}X_i = V$, (ii)~for each edge
$(v,w) \in E$, there exist an $i \in N(T)$ such that $\{v, w\} \in
X_i$, and (iii)~for each $v \in V$, the set of nodes $\{i: v \in
X_i\}$ forms a subtree of~$T$. The width of a tree decomposition
$(\{X_i: i \in N(T)\},T)$ equals $\max_{i \in N(T)}\{|X_i| - 1\}$.
The treewidth of a graph~$G$ is the minimum width over all tree
decompositions of~$G$. We say that a tree decomposition is nice if
any node of its tree that is not the root is one of the following
types:
\begin{itemize}
\item a leaf that contains a single vertex from the graph;
\item an introduce node $X_i$ with one child $X_j$ such that $X_i=X_j \cup \{v\}$ for some vertex $v \in V$;
\item a forget node $X_i$ with one child $X_j$ such that $X_j=X_i \cup \{v\}$ for some vertex $v \in V$;
\item a join node $X_i$ with two children $X_j$ and $X_l$ such that $X_i=X_j=X_l$.
\end{itemize}
Assume that the local graph partitioning problem~$\Pi$ is a
minimization problem (we want to find $V'$ such that
$\mathrm{val}(V') \leqslant p$), the maximization case being
similar. An algorithm that transforms in linear time an arbitrary
tree decomposition into a nice one with the same treewidth is
presented in~\cite{Kloks94}. Consider a nice tree decomposition of
$G$ and let $T_i$ be the subtree of $T$ rooted at $X_i$, and
$G_i=(V_i,E_i)$ be the subgraph of $G$ induced by the vertices in
$\bigcup_{X_j \in T_i} X_j$. For each node
$X_i=(v_1,v_2,\ldots,v_{|X_i|})$ of the tree decomposition, define a
configuration vector $\vec{c} \in \{0,1\}^{|X_i|}$; $\vec{c}[j]=1
\Longleftrightarrow v_j \in X_i$ belongs to the solution. Moreover,
for each node~$X_i$, consider a table $A_i$ of size $2^{|X_i|}
\times (k+1)$. Each row of $A_i$ represents a configuration and each
column represents the number~$k'$, $0 \leq k' \leq k$, of vertices
in $V_i \setminus X_i$ included in the solution. The value of an
entry of this table equals  the value of the best solution
respecting both the configuration vector and the number $k'$, and
$-\infty$ is used to define an infeasible solution. In the sequel,
we set $X_{i,t}=\{v_h \in X_i: \vec{c}(h)=1\}$
and $X_{i,r}=\{v_h \in X_i: \vec{c}(h)=0\}$.

The algorithm examines the nodes of $T$ in a bottom-up way and
fills in the table $A_i$ for each node~$X_i$.
In the initialization step, for each leaf node~$X_i$ and each
configuration~$\vec{c}$,
we have $A_i[\vec{c},k']=0$ if $k'=0$; otherwise $A_i[\vec{c},k']=-\infty$.

If $X_i$ is a forget node, then consider a configuration~$\vec{c}$
for~$X_i$. In $X_j$ this configuration is extended with the decision
whether vertex $v$ is included into the solution or not. Hence, taking
into account that $v \in V_i \setminus X_i$ we get:
$$
A_i\left[\vec{c},k'\right] = \min\left\{A_j\left[\vec{c} \times
\{0\},k'\right],A_j\left[\vec{c} \times \{1\},k'-1\right]\right\}
$$
for each configuration $\vec{c}$ and each $k'$, $0 \leq k' \leq k$.

If $X_i$ is an introduce node, then consider a configuration
$\vec{c}$ for $X_j$. If $v$ is taken in $V'$,
its inclusion adds the quantity
$\delta_v=\alpha_1|E(\{v\},X_{i,t})|+\alpha_2|E(\{v\},X_{i,r})|$
to the solution. The crucial point is
that~$\delta_v$ does not depend on the $k'$ vertices
of $V_i \setminus X_i$ taken in the solution. Indeed, by
construction a vertex in $V_i \setminus X_i$ has its subtree
entirely contained in $T_i$. Besides, the subtree of $v$ intersects~$T_i$
only in its root, since $v$ appears in $X_i$, disappears from~$X_j$ and has,
by definition, a connected subtree.
So, we know that there is no edge in $G$ between
$v$ and any vertex of $V_i \setminus X_i$. Hence, $A_i[\vec{c}
\times \{1\},k'] = A_j[\vec{c},k'] + \delta_v$, since $k'$ counts
only the vertices of the current solution in $V_i \setminus X_i$.
The case where $v$ is discarded from the solution (not taken in
$V'$) is completely similar; we just define $\delta_v$ according to
the number of edges linking $v$ to vertices of $T_i$ respectively in
$V'$ and not in~$V'$.

If $X_i$ is a join node, then for each configuration~$\vec{c}$ for~$X_i$ and each $k'$, $0 \leq k' \leq k$, we have to find the best
solution obtained by $k_j$, $0 \leq k_j \leq k'$, vertices in~$A_j$
plus $k'-k_j$ vertices in~$A_l$.
However, the quantity $\delta_{\vec{c}} =
\alpha_1|E(X_{i,t})|+\alpha_2|E(X_{i,t},X_{i,r})|$
is counted twice. Note that~$\delta_{\vec{c}}$ depends only on~$X_{i,t}$ and~$X_{i,r}$,
since there is no edge between $V_l \setminus X_i$ and $V_j \setminus X_i$.
Hence, we get:
$$
A_i\left[\vec{c},k'\right] = \max_{0 \leq k_j \leq
k'}\left\{A_j\left[\vec{c},k_j\right]+A_l\left[\vec{c},k'-k_j\right]\right\}
- \delta_c
$$
and the proof of the proposition is completed.
\end{proof}
\begin{corollary}
Restricted to trees, any local graph partitioning problem can be solved in polynomial time.
\end{corollary}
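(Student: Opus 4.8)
The plan is to derive this corollary as an immediate specialization of Proposition~\ref{tw}. First I would recall the elementary fact that every tree (indeed every forest) on at least two vertices has treewidth exactly~$1$: a tree decomposition of width~$1$ is obtained by taking, for each edge $(u,w)$ of the tree, a bag $\{u,w\}$, and joining two such bags whenever the corresponding edges share an endpoint; conditions (i)--(iii) in the definition of a tree decomposition used in the proof of Proposition~\ref{tw} are routine to verify, and no bag has more than two vertices, so the width is at most~$1$. Hence $\mathrm{tw}(G)\leqslant 1$ whenever $G$ is a tree (a single-vertex or edgeless graph only makes this smaller).

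Next I would invoke Proposition~\ref{tw} directly: any local graph partitioning problem is solvable in time $O^*(2^{\mathrm{tw}})$. Substituting $\mathrm{tw}\leqslant 1$ gives a running time of $O^*(2^1)=O^*(2)$, which is polynomial in the input size, since the $O^*(\cdot)$ notation suppresses only polynomial factors. The single bookkeeping point to address is that the dynamic programming of Proposition~\ref{tw} consumes a \emph{nice} tree decomposition as input; but the width-$1$ decomposition of a tree described above is produced in linear time, and its transformation into a nice tree decomposition of the same width is linear by the algorithm of~\cite{Kloks94} cited in the proof of Proposition~\ref{tw}. Composing these steps yields an overall polynomial-time algorithm.

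I do not expect any genuine obstacle: the statement is purely the restriction of Proposition~\ref{tw} to a graph class of bounded (in fact minimal nontrivial) treewidth, so the only thing requiring a sentence of justification is the classical claim that trees have treewidth~$1$, together with the observation that a suitable (nice) tree decomposition can be built in linear time rather than merely assumed to exist.
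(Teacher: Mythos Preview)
Your proposal is correct and follows exactly the intended route: the paper states this corollary immediately after Proposition~\ref{tw} with no proof at all, because it is the obvious specialization to the case $\mathrm{tw}\leqslant 1$. Your justification that trees have treewidth~$1$ and that a nice decomposition can be built in linear time is more detail than the paper bothers to give, but it is entirely in line with the authors' reasoning.
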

\begin{corollary}
\textsc{min bisection} parameterized by the treewidth of the input graph is FPT.
\end{corollary}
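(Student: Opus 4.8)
The final statement to prove is the corollary that \textsc{min bisection} parameterized by the treewidth of the input graph is FPT. The plan is to view \textsc{min bisection} as a member of the class of local graph partitioning problems and to invoke Proposition~\ref{tw} directly. First I would recall that \textsc{min bisection} asks, on a graph $G=(V,E)$ with $n$ even, for a partition of $V$ into two sets of size exactly $n/2$ minimizing the number of edges crossing the partition; this is precisely \textsc{min $(k,n-k)$-cut} with $k=n/2$, which in turn is the local problem $\mathcal L(\min,0,1)$ (take $\alpha_1=0$, $\alpha_2=1$, and $\goal{}=\min$) with the cardinality parameter fixed to $k=n/2$. Since fixing $k$ to a specific value is a legitimate (indeed trivial) specialization of a local graph partitioning problem, \textsc{min bisection} falls under the scope of Proposition~\ref{tw}.

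**Key step.** Apply Proposition~\ref{tw}: any local graph partitioning problem can be solved in time $O^*(2^{\mathrm{tw}})$, where $\mathrm{tw}$ is the treewidth of the input graph. Instantiating with the local problem $\mathcal L(\min,0,1)$ and $k=n/2$ yields an algorithm deciding whether there exists a bisection with at most $p$ crossing edges in time $O^*(2^{\mathrm{tw}})$; optimizing over $p$ (or simply reading the relevant table entry of the root node in the dynamic program of Proposition~\ref{tw}) gives the exact minimum bisection width in the same running time. Since $O^*(2^{\mathrm{tw}})$ is of the form $f(\mathrm{tw})\cdot|G|^{O(1)}$, this is by definition an FPT algorithm with respect to the parameter $\mathrm{tw}$, which is exactly the claim.

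**Main obstacle.** There is essentially no technical obstacle: the work has already been done in Proposition~\ref{tw}. The only point deserving a word is why this corollary is worth stating at all, given Courcelle-type meta-theorems — and the answer, already spelled out in the introduction, is that \textsc{min bisection} is \emph{not} expressible in MSO (equicardinality of two vertex sets is not MSO-definable), so its fixed-parameter tractability by treewidth does \emph{not} follow from Courcelle's Theorem; Proposition~\ref{tw} handles the cardinality constraint explicitly via the extra column index $k'$ in the tables, which is precisely the feature MSO lacks. Thus the corollary is an immediate consequence of Proposition~\ref{tw} once one observes that \textsc{min bisection} is the local problem $\mathcal L(\min,0,1)$ with $k=n/2$, and the proof is complete.
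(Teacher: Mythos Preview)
Your proposal is correct and is exactly the approach the paper intends: the corollary is stated without proof precisely because \textsc{min bisection} is the instance $\mathcal L(\min,0,1)$ with $k=n/2$, and Proposition~\ref{tw} applies verbatim. Your remark about MSO non-expressibility (why Courcelle's Theorem does not already give this) also matches the motivation the paper spells out in the introduction.
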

It is worth noticing that the result easily extends to the weighted
case (where edges are weighted) and to the case of partitioning $V$ into a
constant number of classes (with a higher running time).

Another natural parameter frequently used in the parameterized complexity framework is the size~$\tau$ of a minimum vertex cover of the input graph. Since it always holds that $\mathrm{tw} \leqslant \tau$, the result of Proposition~\ref{tw} immediately applies to parameterization by~$\tau$. However, the algorithm developed there needs exponential space. In what follows, we give a simple parameterization by~$\tau$ using polynomial space.
\begin{proposition}
\label{FPTtau}
\textsc{max} and \textsc{min $(k,n-k)$-cut} parameterized by~$\tau$ can be solved in~FPT~$O^*(2^\tau)$ time and in polynomial space.
\end{proposition}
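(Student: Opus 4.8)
The plan is to exploit the structure imposed by a vertex cover $C$ of size $\tau$. Let $C$ be a minimum vertex cover of $G=(V,E)$ and let $I=V\setminus C$ be the complementary independent set. Every edge of $G$ has at least one endpoint in $C$, so a solution $V'$ of size $k$ is essentially determined, as far as the cut value is concerned, by (i) which subset $C'\subseteq C$ of the vertex cover it contains, and (ii) for each vertex $v\in I$, only whether $v\in V'$ and how many of its (at most $\tau$) neighbours in $C$ fall on either side. First I would branch over all $2^\tau$ choices of $C'=V'\cap C$; this is the source of the $O^*(2^\tau)$ factor. Having fixed $C'$, the contribution to the cut of edges with both endpoints in $C$ is completely determined, and it remains to decide how to distribute the remaining $k-|C'|$ vertices among $I$ so as to optimise the number of edges crossing between $V'$ and $V\setminus V'$.

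The key observation for the second phase is that two vertices $u,v\in I$ with the same neighbourhood in $C$ are interchangeable: putting such a vertex into $V'$ contributes exactly $|N(u)\cap (C\setminus C')|$ crossing edges (for the \textsc{max} problem; $|N(u)\cap C'|$ for \textsc{min}), a quantity depending only on the ``type'' $N(u)\subseteq C$. Since there are at most $2^\tau$ possible types, I would group the vertices of $I$ by type, and within each type greedily order them — but in fact, since every vertex of a given type has the same per-vertex crossing contribution and contributes nothing among themselves ($I$ is independent), the only thing that matters is \emph{how many} vertices of each type we select. So the remaining problem is: choose nonnegative integers $n_t$ for each type $t$, with $n_t$ bounded by the number of available vertices of type $t$ and $\sum_t n_t = k-|C'|$, optimising $\sum_t n_t w_t$ where $w_t$ is the (signed) crossing weight of type $t$. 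This is an integer problem with at most $2^\tau$ variables and a single cardinality constraint, solvable greedily in polynomial time: sort the types by $w_t$ (descending for \textsc{max}, ascending for \textsc{min}) and fill up from the best types until $k-|C'|$ vertices are used. One has to be slightly careful to also account for the edges between $C\setminus C'$ and $V'\cap I$ versus edges within $C$ correctly, but all of this is bookkeeping once $C'$ is fixed.

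Putting the pieces together: for each of the $2^\tau$ guesses of $C'$ we do polynomial work (compute the within-$C$ contribution, bucket $I$ by type, run the greedy selection, check the cardinality is achievable), and we keep the best feasible solution over all guesses; infeasible guesses (where $k-|C'|$ exceeds $|I|$ or is negative) are discarded. The space used is polynomial since we never store the $A_i$ tables of Proposition~\ref{tw} — we only ever hold one guess $C'$ and a few arrays indexed by the at most $2^\tau$ types, and even those can be collapsed to counters. The main obstacle, and the only place needing care, is verifying that the greedy rule for the integer selection is genuinely optimal and that the contribution of each type vertex is truly independent of which other $I$-vertices are chosen; this independence is immediate because $I$ is independent, so no edge is ever double-counted or lost, and the linear objective with one knapsack-style constraint and unit item-sizes makes the exchange argument for greediness straightforward. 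This completes the proof sketch for Proposition~\ref{FPTtau}.
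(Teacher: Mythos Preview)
Your approach is essentially the paper's: enumerate the $2^\tau$ possible intersections $C'=V'\cap C$ with a minimum vertex cover $C$, then complete greedily from the independent set $I=V\setminus C$, which works precisely because vertices of $I$ contribute to the cut independently of one another. The paper skips the ``types'' layer and simply sorts the vertices of $I$ by their individual marginal contribution, taking the best $k-|C'|$; your grouping by neighbourhood type is correct but unnecessary, and is the one place your polynomial-space claim wobbles---arrays or counters indexed by up to $2^\tau$ types would be exponential space, though since at most $|I|\le n$ types actually occur (or since one can bypass types entirely) the issue dissolves. One precision worth stating explicitly: the correct greedy key is the \emph{net} change $|N(u)\cap(C\setminus C')|-|N(u)\cap C'|$, because moving $u$ into $V'$ also removes the crossing edges it previously had to $C'$; your phrase ``signed crossing weight'' hints at this but the earlier sentence giving only $|N(u)\cap(C\setminus C')|$ is incomplete.
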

\begin{proof}
Consider the following algorithm:
\begin{itemize}
\item compute a minimum vertex cover~$C$ of~$G$;
\item for every subset~$X$ of~$C$ of size~$|X|$ smaller than~$k$, complete~$X$ with the $k-|X|$ vertices
of $V \setminus C$ that maximize (resp., minimize) their incidence
with~$C\setminus X$ (i.e., the number of neighbours in~$C\setminus
X$);
\item output the best solution.
\end{itemize}
Recall that a minimum size vertex cover can be computed in time~$O^*(1.2738^{\tau})$ time by means of the fixed-parameter algorithm of~\cite{chenetal_vc_tcs10} and using polynomial space. The operation on every subset is polynomial, so the global computation time is at most~$O^*(2^\tau)$.

The soundness follows from the fact that a complement of a vertex cover is an independent set. Denoting by~$V'$ the optimal vertex-set (i.e., the~$k$ vertices inducing an optimal cut), then $V' \cap C$ will be considered by the above algorithm, and then every vertex of the completion will add exactly to the solution its number of neighbors in $V' \cap C$, which is maximized (or minimized) in the algorithm.
\end{proof}

\end{document}